\newtheorem{assumption}{Assumption}
\newtheorem{proposition}{Proposition}
\newtheorem{lemma}{Lemma}
\newtheorem{theorem}{Theorem}
\newtheorem{remark}{Remark}
\newtheorem{definition}{Definition}
\DeclareMathOperator*{\argmin}{arg\,min}
\DeclareMathOperator{\Tr}{Tr}
\title{\LARGE \bf
Reinforcement Learning for Linear Quadratic Control is Vulnerable Under Cost Manipulation
}
\author{Yunhan Huang$^{1}$ and Quanyan Zhu$^{1}$% <-this % stops a space
%\thanks{This work was not supported by any organization}% <-this % stops a space
\thanks{$^{1}$ Y. Huang and Q. Zhu are with the Department of Electrical and Computer Engineering,
        New York University, 370 Jay St., Brooklyn, NY.
         {\tt\small \{yh.huang, qz494\}@nyu.edu}}
}
\begin{document}

\maketitle
\thispagestyle{plain}
\pagestyle{plain}

%%%%%%%%%%%%%%%%%%%%%%%%%%%%%%%%%%%%%%%%%%%%%%%%%%%%%%%%%%%%%%%%%%%%%%%%%%%%%%%%
\begin{abstract}
In this work, we study the deception of a Linear-Quadratic-Gaussian (LQG) agent by manipulating the
cost signals. We show that a small falsification of the cost parameters will only lead to a bounded change in the optimal policy. The bound is linear on the amount of falsification the attacker can apply to the cost parameters. We propose an attack model where the attacker aims to mislead the agent into learning a `nefarious' policy by intentionally falsifying the cost parameters. We formulate the attack's problem as a convex optimization problem and develop necessary and sufficient conditions to check the achievability of the attacker's goal.

We showcase the adversarial manipulation on two types of LQG learners: the batch RL learner and the other is the adaptive
dynamic programming (ADP) learner. Our results demonstrate that with only $2.296\%$ of falsification on the cost data, the attacker misleads the batch RL into learning the 'nefarious' policy that leads the vehicle to a dangerous position. The attacker can also gradually trick the ADP learner into learning the same `nefarious' policy by consistently feeding the learner a falsified cost signal that stays close to the actual cost signal. The paper aims to raise people's awareness of the security threats faced by RL-enabled control systems.
\end{abstract}

%%%%%%%%%%%%%%%%%%%%%%%%%%%%%%%%%%%%%%%%%%%%%%%%%%%%%%%%%%%%%%%%%%%%%%%%%%%%%%%%
\section{Introduction}
The adoption of machine learning (ML), especially reinforcement learning (RL), in control theory and engineering enables the agent to learn a high-quality control policy without knowing the model or the cost criteria \cite{bradtke1994adaptive,mania2019certainty,lewis2009reinforcement,jiang2012computational,pang2019adaptive,gorges2019distributed}. The agent learns either by interacting with the environment through communication channels \cite{bradtke1994adaptive,lewis2009reinforcement} or by processing existing datasets from previous experience \cite{mania2019certainty}. 
However, the incorporation of learning into control enlarges the attack surface of the underlying system and creates opportunities for the adversaries. Adversarial parties can launch attacks such as Denial-of-Service attacks (DoS) \cite{cetinkaya2019overview,huang2021cross}, False Data Injection (FDI) attacks \cite{mo2010false,miao2016coding}, spoofing attacks \cite{zhang2017strategic,liu2020secure} on the communication channels, and data poisoning attacks \cite{ma2019policy,zhang2020online} on the existing dataset to mislead the agent and sabotage the underlying system. If not dealt with properly, such attacks can lead to a catastrophe in the underlying system. For example, self-driving platooning vehicles can collide with each other when measurement data is manipulated \cite{behzadan2019adversarial}. It is, hence, critical to study security threats to learning algorithms and design effective defense mechanisms to safeguard the learning-enabled control system.

The successful application of RL in control systems relies on accurate, timely, and consistent feedback from the environment or reliable datasets from experience. The accuracy, timeliness, and consistency of the feedback information are unlikely to be guaranteed in practice, especially in the presence of adversarial interventions. Without consistent or accurate feedback from the environment, the agent can either fail to learn an implementable control policy or be tricked into a `nefarious' control policy favored by the adversaries. In the last five years, there has been a surge in the number of research studies that focus on the security threats faced by RL with discrete state space \cite{chen2021adversarial, figura2021adversarial,huang2019deceptive,huang2021manipulating,ma2019policy,rakhsha2020policy,wang2020reinforcement,xu2021transferable,zhang2020adaptive}. Very few, if not none, of these studies have investigated the potential security threats that can impede RL-enabled control systems.

In this paper, we take the initiative to look into the security threats faced by RL-enabled control systems. In particular, we are interested in the deception of a Linear-Quadratic-Gaussian (LQG) agent by manipulating the cost signals. The attacker falsifies the cost signals received by the LQG agent to trick the agent into learning some `nefarious' policies favored by the attacker. In RL-enabled systems, the agent obtains cost signals from user feedback, as is illustrated in \ref{fig:DiagramCostMani}.
For example, in recommendation systems, the costs are often related to the number of user clicks, purchases, or add-to-cart rate \cite{wang2014exploration}; Cost signals are represented by user sentiment or engagement in RL-enabled dialogue generation \cite{li2016deep,zhang2020adaptive}; in human-robot interaction, cost signals are generated according to the level of dissatisfaction the human has while interacting with the robot \cite{modares2015optimized,priess2014solutions}. Such occasions open a back door for the attacker to influence the RL-enabled system by manipulating the cost signals. Beyond that, in networked control systems where the agent is remote to the plant, cost signals can be falsified or jammed by the attacker when transmitting from the plant to the agent \cite{huang2019deceptive,huang2021manipulating}. 

\begin{figure}[h]
\includegraphics[width=1.1\columnwidth]{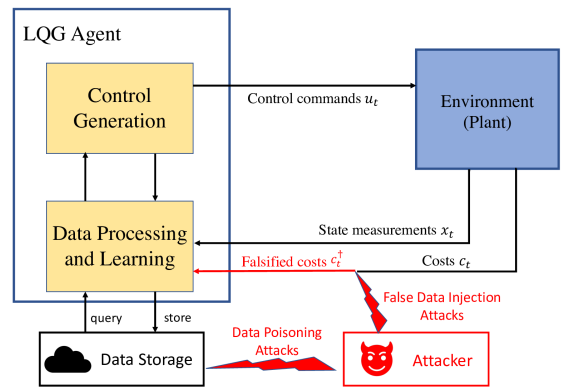}
\caption{Reinforcement learning for control under adversarial manipulation on cost signals.}
\label{fig:DiagramCostMani}
\end{figure}

We characterize the fundamental limits of cost manipulation, i.e., how much the attacker needs to falsify the cost signals to steer the learned control policy from an optimal one to a `nefarious' one favored by the attacker. Given a `nefarious' control policy the attacker aims for, we formulate the attack's problem as an optimization problem, which is proved to be convex. We show that the attacker cannot deceive the agent into learning some policies no matter how the attacker falsifies the cost signals. Hence, the optimization problem is infeasible for these policies. We develop a set of conditions in the frequency domain under which the attacker can mislead the agent into learning the `nefarious' policy.

We showcase the deception of two types of LQG learners: one is the batch RL learner \cite{mania2019certainty} and the other is the adaptive dynamic programming (ADP) learner \cite{bradtke1994adaptive}. The batch RL learner first estimates the system and cost parameters using a pre-collected dataset. The dataset includes the data points that record the state and control trajectories and corresponding cost signals. Then the agent computes the learned policy using the estimated system and cost parameters. Suppose the attacker can falsify the cost signals in the dataset. We show that by only falsifying $2\%$ of the cost signals (in magnitude), the attacker can trick the agent into a policy that steers a vehicle into a dangerous position.

Another is the ADP learner \cite{bradtke1994adaptive}. The ADP learner updates its estimates about the $Q$-function adaptively by interacting with the environment (plant) online. The ADP learner receives cost signals and state measurements from the environment and applies controls to excite the environment. Suppose the attacker can falsify the cost signals transmitted through some communication channels. We show that the attacker can craft attacks on cost signals by simply solving a convex optimization problem. The crafted attacks falsify the cost signals in a way such that the falsified cost signals and the true cost signals remain close when the system becomes stabilized. The experiment shows that the attacker can gradually mislead the agent into learning a `nefarious' policy favored by the attacker.

While most studies focus on improving the performance of RL algorithms in control (e.g., convergence rate, robustness, data efficiency, computational complexity), very few studies pay attention to the security threats faced by RL algorithms in control. The theories and examples presented in this paper demonstrate the vulnerabilities of RL-enabled control systems, which shows the necessity of investigating the potential security challenges faced by RL in control.

\textbf{Related works:}
There is a recent trend in studying security threats faced by RL algorithms \cite{chen2021adversarial, figura2021adversarial,huang2019deceptive,huang2021manipulating,ma2019policy,rakhsha2020policy,wang2020reinforcement,xu2021transferable,zhang2020adaptive,zhang2020robust,liu2021provably}.We can taxonomize these studies by the three types of attacks targeting at three different signals: attacks on the reward or cost signals\cite{huang2019deceptive,huang2021manipulating,ma2019policy,wang2020reinforcement,zhang2020adaptive}, attacks on the state sensing \cite{rakhsha2020policy,xu2021transferable,behzadan2019adversarial}, attacks on the action execution \cite{liu2021provably}. One can refer to Section 5 of \cite{huang2022reinforcement} for a brief review of this topic. Most of these studies focus on RL with discrete state and action spaces, and very few consider the security threats faced by RL-enabled control systems. Ma et al. \cite{ma2019policy} studies data poisoning attacks on the reward data on batch RL. Beyond the case of discrete state space, the authors also demonstrate the effectiveness of the attacks on an LQR learner using batch data. Our paper goes beyond batch RL and focuses on the deception of LQG agents through manipulating cost signals in a general setting. Hence, the results in this paper can be extended to different learning schemes and will not be limited to batch RL. In addition to demonstrating the effectiveness of the attacks through numerical experiments, we also develop theoretical underpinnings to understand the fundamental limits of such attacks, e.g., whether the attacker's goal can be achieved or not, how much falsification is needed to achieve such goals.

\textbf{Notation:}
Denote the set of non-negative real numbers by $\mathbb{R}_+$. Let $\mathbb{C}$ be the complex plane. 
Let $\mathbb{S}^n$ the set of all real symmetric matrices of order $n$. Denote the set of all positive semi-definite (respectively, positive definite) symmetric matrices by $\mathbb{S}_+$ (respectively, $\mathbb{S}_{++}$). Given $M,N\in\mathbb{S}$, $M\succeq N$ (respectively, $M\succ N$) means $M-N$ is positive semi-definite (respectively, positive definite). We denote $ I_n$  the identity matrix of order $n$, and $I$ is the identity matrix whose order depends on the context. Throughout the paper, prime denotes the transpose.

For a matrix $M\in\mathbb{S}$ of order $n$,  define $\Theta(M)$ as the half-vectorization of $M$:
$$
\Theta(M)\coloneqq \left[m_{1,1},\cdots,m_{1,n}, m_{2,2},\cdots, m_{2,n},\cdots,m_{n-1,n-1} \right]'.
$$
Here, $m_{i,j}$ is the element in the $i$-th row and the $j$-th column of $M$. For a vector $x\in\mathbb{R}^n$, define
$$
\bar{x} \coloneqq \left[x_1^2,2x_1x_2\cdots,2x_1x_n,x_2^2,\cdots,2x_{2}x_n,\cdots,x_n^2 \right]'.
$$

The Frobenius norm of a matrix is denoted by $\Vert \cdot \Vert_F$. The norm $\Vert \cdot \Vert$ refers to the Euclidean norm for vectors and the spectral norm for matrices unless specified otherwise. For a real matrix $M\in\mathbb{R}^{n\times n}$, the spectral radius of $M$ is denoted by $\rho(M)$. Define $\tau(M)\coloneqq \sup_{k\in\mathbb{N}} \{\Vert M^k \Vert/ \rho(M)^k\}$ as the smallest value such that $\Vert M^k\Vert \leq \tau(M,\rho(M))\rho(M)^k$ for all $k\geq 0$.

\textbf{Organization of the paper:} In Section \ref{Sec:LQGFormulation}, we introduce the LQG problem with general quadratic cost and present some preliminary results regarding the LQG problem. Section \ref{Sec:LQGManiCostPara} proposes the problem of adversarial manipulation on the cost parameters, the fundamental limits of what the attacker can or cannot achieve. Section \ref{Sec:LQGManiCostPara} lays a theoretical foundation for the attack models on two popular learning methods: the Batch RL method and the ADP method, which are  introduced in \ref{Sec:BarchRLLearner} and \ref{Sec:DecevADPLearner}. We demonstrate the results using numerical examples in Section \ref{Sec:NumericalStudy}.

\section{LQG with General Quadratic Cost: Preliminaries and Background}\label{Sec:LQGFormulation}

Consider the discrete-time, multi-variable system 
\begin{equation}\label{Eq:SystemDynamic}
x_{t+1} = g(x_t,u_t,w_t) \coloneqq A x_t  + Bu_t + Cw_t,
\end{equation}
where $x_t\in\mathbb{R}^n,t=0,1,\cdots$ is the system state, $u_t \in\mathbb{R}^m,t=0,1,\cdots$ is the control input, $w_t\in\mathbb{R}^q$ is drawn $i.i.d.$ from the standard Gaussian distribution $\mathcal{N}(0,\sigma^2 I_q)$, $A\in\mathbb{R}^{n\times n}$, $B\in\mathbb{R}^{n\times m}$, and $C\in\mathbb{R}^{n\times q}$.

Consider a stationary control policy $\pi:\mathbb{R}_n\rightarrow \mathbb{R}_m$ taking the form
\begin{equation}
u_t = \pi(x_t) = Kx_t +k,
\end{equation}
where $K\in\mathbb{R}^{m\times n}$ and $k\in\mathbb{R}^m$. The stage-wise cost of the system is quadratic
\begin{equation}\label{Eq:ImmediateCostFun}
c_t = c(x,u) =  x' D x + d'x  + r + u'Eu,
\end{equation}
for some positive semi-definite $D\in\mathbb{S}_+$, positive definite $E\in \mathbb{S}_{++}$, vector $d\in\mathbb{R}^n$, and scalar $r$.
\begin{assumption}\label{Ass:ConllabilityObservability}
Assume that $(A,B)$ is controllable and $(A,D^{1/2})$ is observable. 
\end{assumption}

\begin{assumption}
$B$ has full column rank.
\end{assumption}

\begin{definition}
We say $K\in\mathbb{R}^{m\times n}$ is stabilizing if the matrix $A+BK$ is Schur stable, i.e., $\rho(A+BK)<1$.
\end{definition}

We consider the total cost as the discounted accumulated costs over an infinite horizon. Starting at state $x_t$ under control policy characterized by $K$ and $k$, the total cost is $V_{K,k}(x_t) = \mathbb{E}[\sum_{i=0}^\infty \gamma^i c_{t+i}|x_t]$, where $0\leq \gamma \leq 1$ is the discount factor and the expectation is over $\{w_t,t=0,1,2,\cdots\}$.

It is well known that if $K$ is stabilizing, $V_{K,k}$ takes the form \cite{bertsekas1987dynamic}:
$$
V_{K,k}(x_t) = x_t' P_{K} x_t + h_{K,k}'x + l_{K,k},
$$
Note that the subscript of $P_K$ is $K$ instead of $K,k$ because $P_K$ only depends on $K$. Let $K^*$ and $k^*$ characterize the policy which is optimal in a sense that the total discount cost of every state is minimized. The value function is defined by
$$
V^*(x) \coloneqq V_{K^*,k^*}(x) = x' P^* x + {h^*}'x + l^*,
$$
where $P\in\mathbb{S}_{++}$, $h^*\in\mathbb{R}^n$, and $l^*\in\mathbb{R}$.
It is well known that the optimal policy can be characterized by \cite{bertsekas1987dynamic}
$$
u_t^* = K^*x_t + k^*,
$$
where 
\begin{equation}\label{Eq:ControllerParameters}
\begin{aligned}
K^* &= -\gamma (E+\gamma B'P^*B)^{-1} B'P^*A,\\
k^* &= -(\gamma/2) (E+\gamma B'P^*B)^{-1} B'h^*,
\end{aligned}
\end{equation}
where 
\begin{align}\label{Eq:RiccatiEquation}
P^* &= D + \gamma A'P^*A - \gamma^2 A'P^* B(E + \gamma B'P^*B)^{-1}B'P^*A,\\
h^* &= d + \gamma (A+BK^*)'h^*, \label{Eq:LinearEquation}\\ 
l^* &= \frac{ r + \frac{\gamma}{2} \Tr(\Sigma_wC'P^*C) - \frac{\gamma^2}{2}  {h^*}'B(E + \gamma B'P^*B)^{-1} B'{h^*}}{1-\gamma}. \nonumber
\end{align}

\begin{lemma}\label{Lemma:PolicyUniquelyDecided}
Under Assumption \ref{Ass:ConllabilityObservability}, the pair $(K^*,k^*)$ is uniquely decided by the system parameters $(A,B)$ and the cost parameters $(D,E,d)$. \end{lemma}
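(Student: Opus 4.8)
The plan is to trace the dependence of $(K^*,k^*)$ through the defining relations \eqref{Eq:ControllerParameters}--\eqref{Eq:LinearEquation} and show that, once the data $(A,B)$ and $(D,E,d)$ are given (with the discount $\gamma$ fixed), each relation pins down its unknown uniquely. I would organize this in three steps: (i) $P^*$ is uniquely determined by $(A,B,D,E)$; (ii) $K^*$ is then an explicit function of $(A,B,E,P^*)$; (iii) $h^*$ is uniquely determined, and $k^*$ follows explicitly from it.

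Step (i) is the only substantive one. I would first remove the discount by the standard rescaling $\tilde A\coloneqq\sqrt{\gamma}\,A$, $\tilde B\coloneqq\sqrt{\gamma}\,B$, under which \eqref{Eq:RiccatiEquation} becomes the ordinary discrete-time algebraic Riccati equation $P = D + \tilde A'P\tilde A - \tilde A'P\tilde B(E+\tilde B'P\tilde B)^{-1}\tilde B'P\tilde A$ with weights $(D,E)$. For $\gamma>0$, controllability of $(A,B)$ implies controllability (hence stabilizability) of $(\tilde A,\tilde B)$, and observability of $(A,D^{1/2})$ implies observability (hence detectability) of $(\tilde A,D^{1/2})$, so Assumption \ref{Ass:ConllabilityObservability} is inherited by the rescaled data. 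I would then invoke the classical existence--uniqueness theorem for the DARE (e.g.\ \cite{bertsekas1987dynamic}): under stabilizability and observability there is exactly one solution $P^*\in\mathbb{S}_+$, it in fact lies in $\mathbb{S}_{++}$, and it is the unique solution rendering the closed loop Schur. (For $\gamma=0$ everything is trivial: $P^*=D$, $K^*=0$, $k^*=0$, $h^*=d$.) Hence $P^*$ is a well-defined function of $(A,B,D,E)$ alone, independent of $C$, $\sigma$, and $r$.

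Given $P^*$, step (ii) reads $K^*$ off the first line of \eqref{Eq:ControllerParameters} as a fixed expression in $(A,B,E,P^*)$, hence in $(A,B,D,E)$; the inverse there is well defined since $E\succ 0$ and $P^*\succeq 0$ give $E+\gamma B'P^*B\succ 0$. For step (iii), \eqref{Eq:LinearEquation} rearranges to $\bigl(I-\gamma(A+BK^*)'\bigr)h^* = d$. Since $K^*$ comes from the stabilizing solution of the rescaled DARE, $\sqrt{\gamma}(A+BK^*)$ is Schur, so $\rho\bigl(\gamma(A+BK^*)\bigr)=\gamma\,\rho(A+BK^*)<\sqrt{\gamma}\le 1$; thus $I-\gamma(A+BK^*)'$ is nonsingular and $h^*=\bigl(I-\gamma(A+BK^*)'\bigr)^{-1}d$ is the unique solution, a function of $(A,B,d,K^*)$ and therefore of $(A,B,D,E,d)$. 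The second line of \eqref{Eq:ControllerParameters} then gives $k^*$ as an explicit function of $(B,E,P^*,h^*)$, so $k^*$ is likewise determined by $(A,B,D,E,d)$. Chaining the three steps proves the lemma.

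The main obstacle is step (i): the rest is bookkeeping with explicit formulas, whereas the uniqueness of $P^*$ genuinely rests on Riccati-equation theory. The points needing care are (a) verifying that controllability and observability survive the $\sqrt{\gamma}$-rescaling, which is immediate for $\gamma>0$ with $\gamma=0$ handled separately, and (b) citing the correct notion of uniqueness: in degenerate cases a DARE can admit further non-stabilizing positive semidefinite solutions, but observability of $(A,D^{1/2})$ both excludes those and forces $P^*\succ 0$, consistent with what the surrounding text asserts.
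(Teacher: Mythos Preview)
Your proposal is correct and follows essentially the same route as the paper: establish uniqueness of $P^*$ via classical Riccati theory under controllability/observability, read off $K^*$ from \eqref{Eq:ControllerParameters}, argue that $I-\gamma(A+BK^*)'$ is invertible to pin down $h^*$, and finally obtain $k^*$. The paper cites Kushner for the Riccati step and invokes its Lemma~\ref{Lemma:InvertibleMatrix} for the invertibility, while you supply the $\sqrt{\gamma}$-rescaling and a spectral-radius argument instead; these are minor presentational differences, not a different strategy.
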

The proof of Lemma \ref{Lemma:PolicyUniquelyDecided} is presented in Appendix \ref{Proof:PolicyUniquelyDecided}. Since the optimal policy $(K^*,k^*)$ is uniquely determined by $(A,B)$ and $(D,E,d)$, for a given system $(A,B)$ satisfying Assumption \ref{Ass:ConllabilityObservability}, we can define the an auxiliary notation for the solution $(K,k)$ to the discrete-time LQG as 
$$
(K,k) = \mathrm{DLQG}(D,E,d,A,B),
$$
where the mapping $\mathrm{DLQG}:\mathbb{S}_{+}^n\times \mathbb{S}_{++}^m\times \mathbb{R}^n \times \mathbb{R}^{n\times n}\times \mathbb{R}^{n\times m}\rightarrow \mathbb{R}^{m\times n} \times \mathbb{R}^{m}$ is well-defined and characterized by the relations (\ref{Eq:ControllerParameters})-(\ref{Eq:LinearEquation}). Since we are particularly interested in the manipulation of the cost signals, we write $(K,k) = \mathrm{DLQG}(D,E,d)$ for simplicity. 

\subsection{Q-Functions and Policy Improvement}\label{Subsec:IntroPolicyIter}
Define the $Q$-function for a stabilizing policy $K,k$ as
\begin{equation}\label{Eq:DefineQFun}
Q_{K,k}(x,u) = c(x,u) + \gamma \mathbb{E}_w[V_{K,k}(g(x,u,w))|x]
\end{equation}
the value $Q_{K,k}(x,u)$ is the immediate cost of taking control $u$ from state $x$ plus the expected cost-to-go starting at the next state $Ax+Bu+Cw$. And $Q_{K,k}:\mathbb{R}^n\times \mathbb{R}^m\rightarrow \mathbb{R}$ is defined for all states $x$ and all admissible controls $u$. The expression can be written as
\begin{equation}\label{Eq:QFunQuadForm}
\begin{aligned}
&Q_{K,k}(x,u)\\
=& x' D x + d'x  + r + u'Eu + \gamma \mathbb{E}_w[ V_{K,k}(g(x,u,w))]\\
=& \begin{bmatrix} x' & u' & 1
\end{bmatrix} \begin{bmatrix}
 H_{K(xx)} & H_{K(xu)} & H_{K,k(x1)}\\
 H_{K(ux)} & H_{K(uu)} & H_{K,k(u1)}\\
 H_{K,k(1x)} & H_{K,k(1u)} & H_{K,k(11)}\\
 \end{bmatrix} \begin{bmatrix}
 x\\ u\\1
 \end{bmatrix},\\
=& \begin{bmatrix} x' & u' & 1
\end{bmatrix} H_{K,k} \begin{bmatrix}
 x\\ u\\1
 \end{bmatrix},
\end{aligned}
\end{equation}
where $H_{K,k}$ is a symmetric positive definite matrix.
% ,  and
% $$
% \begin{aligned}
% H_{K(xx)}&= \frac{1}{2}(D + \gamma A' P_K A),\\
% H_{K(xu)}&=H_{K,k(ux)}' = \frac{1}{2}\gamma A'P_KB,\\
% H_{K(uu)} &= \frac{1}{2}(E+\gamma B'P_KB),\\
% H_{K,k(u1)} &= H_{K,k(1u)}' = \frac{1}{2}\gamma B'h_{K,k},\\
% H_{K,k(11)} &= r + \frac{\gamma}{2}\Tr(\Sigma_wC'P_KC)+\gamma l_{K,k},\\
% H_{K,k(1x)} &= H_{K,k(x1)}'=\frac{1}{2}(d'+\gamma h_{K,k}'A).
% \end{aligned}
% $$
% Indeed, $P_K$, $h_{K,k}$ and $l_{K,k}$ satisfies
% $$
% \begin{aligned}
% P_K &= D + K'EK + \gamma (A+BK)'P_K(A+BK),\\
% h_{K,k} &= d + K'Ek+ \gamma (A+BK)'h_{K,k},\\
% l_{K,k} &= \frac{1}{1-\gamma} \Big( r+ \frac{1}{2} k' E k + \gamma \frac{1}{2} k' B'P_K Bk + \frac{1}{2} \Tr(\Sigma_w C'P_KC)\\
% &\ \ + \gamma h_{K,k}'Bk\Big).
% \end{aligned}
% $$

Given the policy $K$, $k$, we can compute the $Q$-function $Q_{K,k}(\cdot,\cdot)$ characterized by $H_{K,k}$. We can find an improved policy $\tilde{K}$ and $\tilde{k}$ based on $Q_{k,k}(\cdot,\cdot)$, by solving
$$
\tilde{K}x + \tilde{k} = \min_{u} Q_{K,k}(x,u).
$$
Solving the minimization problem yields:
$$
\begin{aligned}
\tilde{K} &= - H_{K(uu)}^{-1} H_{K(ux)},\\
\tilde{k} &= - H_{K(uu)}^{-1} H_{K,k(u1)}.
\end{aligned}
$$
If the policy characterized by $K$ and $k$ is stabilizing, the feedback policy characterized by $\tilde{K}$ and $\tilde{k}$ is per definite a stabilizing policy --- it has no higher cost than $K$ and $k$ \cite{bertsekas1987dynamic,bradtke1994adaptive}. A new $Q$-function then be assigned to this improved policy and the policy improvement procedure can be repeated ad infinitum. Algorithms following this policy improvement is called policy iteration.

\section{LQG with Manipulated Cost Parameters}\label{Sec:LQGManiCostPara}

RL learning algorithms rely on data $\mathcal{D} = \{(x_t,u_t,c_t,x_{t+1}),t=1,2,\cdots,T\}$ to solve the LQG problem and find a good control policy with certain performance guarantee. For some model-based learning algorithms, the agent first estimates the system parameters $(A,B)$ and $(D,E,d)$ and then computes the `optimal' policy based on the estimates 
$(\hat{A},\hat{B})$ and $(\hat{D},\hat{E},\hat{d})$: 
$(\hat{K},\hat{k}) = \mathrm{DLQG}(\hat{D},\hat{E},\hat{d})$. For RL learning algorithms based on value iteration or policy iteration which will not conduct system identification, the attacker can rely on the mapping $\mathrm{DLQG}$ to craft its attacking strategy and understand the limitations of the attack. 

Hence, to understand the security threats of RL-based LQG problems, it is essential to investigate some fundamental properties regarding the mapping $\mathrm{DLQG}$:
\begin{enumerate}
    \item To trick the agent into the learning a `nefarious' control policy $(K^\dag, k^\dag)$, does there exist a trio $(D^\dag,E^\dag,d^\dag)$ such that $(K^\dag,k^\dag)$ is optimal under system $(A,B)$ and cost parameters $(D^\dag,E^\dag,d^\dag)$, i.e., $(K^\dag,k^\dag) = \mathrm{DLQG}(D^\dag,E^\dag,d^\dag)$?
    \item How much falsification the attacker needs to make on $(D,E,d)$ to trick the agent into learning $(K^\dag,k^\dag)$?
    \item Will a small change in $(D,E,d)$ cause a significant change in $(K,k)$ for $(K,k) = \mathrm{DLQG}(D,E,d)$? Is $\mathrm{DLQG}$ Lipschitz continuous?
\end{enumerate}
Since the goal of the attacker is to mislead the agent into learning a stabilizing policy $(K^\dag,k^\dag)$, we assume that $D^\dag\in\mathbb{S}_{++}^n$, $E^\dag\in\mathbb{S}_{++}^m$ which result into a stabilizing policy.
In the following subsection, we answer these questions. We show that the mapping $\mathrm{DLQG}$ is locally Lipchitz continuous and derives an upper bound regarding how much falsification on $(D,E,d)$ is needed to deceive the agent into learning $(K^\dag,k^\dag)$ instead of $(K^*,k^*)$.

\subsection{Fundamental Limits}\label{Subsec:FunLim}

Note that $K$ is uniquely determined by $D$ and $E$ given $(A,B)$ and solving for $K$ involves solving the Riccati equation (\ref{Eq:RiccatiEquation}). The following theorem presents a perturbation analysis on the Riccati equation (\ref{Eq:RiccatiEquation}) to see how small falsification on $(D,E)$ induces changes in the solution of the Riccati equation $P^*$. 

\begin{proposition}\label{Prop:RiccatiPerturbation}
Let $\rho(A_c^*)$ be the spectral radius of $A_c^*\coloneqq A+BK^*$ under the true cost parameters. Let $D^*$ and $E^*$ be the true cost parameters and $D^\dag$ and $E^\dag$ are the falsified cost parameters. Suppose $\Vert D^\dag - D^*\Vert\leq \epsilon$ and $\Vert E^\dag - E^*\Vert\leq \epsilon$. Denote by $P^*$ (respectively, $P^\dag$) the solution to the Riccati equation (\ref{Eq:RiccatiEquation}) under the true cost parameters (respectively, the falsified parameters). Then, we have
$$\Vert P^\dag - P^*\Vert \leq \Gamma_1 \Vert D^\dag - D^* \Vert + \Gamma_2\Vert E^\dag - E^*\Vert$$
as long as
$$
\begin{aligned}
\epsilon \leq& 4 \gamma^2 \frac{({1-\Vert {E^*}^{-1} \Vert})\tau(A_c^*)^2}{1-\gamma \rho(A_c^*)^2}\cdot\min\{\frac{1-\gamma \rho(A_c^*)^2}{\gamma^2\tau(A_c^*)\Vert A_c^*\Vert \Vert S^*\Vert},1\}  \\ 
& \cdot \Big( \Vert A\Vert^{-2}(\Vert P^*\Vert +1)^{-2} {\Vert {E^*}^{-2}\Vert^2 \Vert B\Vert^2  }\Big)
\end{aligned},
$$
where 
$$
\Gamma_1=4 \gamma^2 \frac{\tau(A_c^*)^2}{1-\gamma \rho(A_c^*)^2}
$$
and
$$
\Gamma_2 =4 \gamma^2 \frac{\tau(A_c^*)^2}{1-\gamma \rho(A_c^*)^2}\Vert A\Vert^2(\Vert P^*\Vert +1)^2 \frac{\Vert {E^*}^{-1}\Vert^2 \Vert B\Vert^2  }{1-\Vert {E^*}^{-1} \Vert}.
$$
\end{proposition}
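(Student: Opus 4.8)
The plan is to treat the discrete-time algebraic Riccati equation (\ref{Eq:RiccatiEquation}) in its value-iteration fixed-point form and run a quantitative perturbation analysis of its solution around $P^*$. For a cost pair $(D,E)$ with $E\succ0$, the solution $P=P(D,E)$ of (\ref{Eq:RiccatiEquation}) is equivalently characterized (in the L\"owner order) by $P=\min_{K}\{D+K'EK+\gamma(A+BK)'P(A+BK)\}$, the minimum attained at $K=-\gamma(E+\gamma B'PB)^{-1}B'PA$. I will use two consequences, for $\bullet\in\{*,\dagger\}$, with $K^\bullet$ the corresponding minimizing gain: (a) the closed-loop Lyapunov identity $P^\bullet=D^\bullet+(K^\bullet)'E^\bullet K^\bullet+\gamma(A_c^\bullet)'P^\bullet A_c^\bullet$ with $A_c^\bullet\coloneqq A+BK^\bullet$, i.e. $P^\bullet=\mathcal{L}_\bullet^{-1}\!\big(D^\bullet+(K^\bullet)'E^\bullet K^\bullet\big)$ for the Lyapunov operator $\mathcal{L}_\bullet(X)\coloneqq X-\gamma(A_c^\bullet)'XA_c^\bullet$; and (b) the sub-optimality bound $P(D,E)\preceq D+K'EK+\gamma(A+BK)'P(D,E)(A+BK)$ for every stabilizing $K$. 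Applying (b) with $K=K^*$ to $P^\dagger=P(D^\dagger,E^\dagger)$ and subtracting the $K^*$-Lyapunov identity (a) for $P^*$ yields $\mathcal{L}_*(P^\dagger-P^*)\preceq(D^\dagger-D^*)+(K^*)'(E^\dagger-E^*)K^*$; since $\mathcal{L}_*^{-1}=\sum_{k\ge0}\gamma^k((A_c^*)')^k(\cdot)(A_c^*)^k$ is positivity-preserving with $\|\mathcal{L}_*^{-1}\|\le\tau(A_c^*)^2/(1-\gamma\rho(A_c^*)^2)$ (Neumann series, using $\gamma\rho(A_c^*)^2<1$ for the optimal discounted closed loop), and $\|K^*\|\le\gamma\,\|{E^*}^{-1}\|\,\|B\|\,\|A\|\,(\|P^*\|+1)$, this already gives the upper half $\lambda_{\max}(P^\dagger-P^*)\le\|\mathcal{L}_*^{-1}\|\big(\|D^\dagger-D^*\|+\|K^*\|^2\|E^\dagger-E^*\|\big)$, valid for all $\epsilon$.

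The matching lower half, a bound on $\lambda_{\max}(P^*-P^\dagger)$, comes from the symmetric use of (b) with $K=K^\dagger$ on $P^*$; the difficulty is that its right-hand side now involves the \emph{perturbed} closed loop $A_c^\dagger=A+BK^\dagger$ and gain $K^\dagger$, which depend on the unknown $P^\dagger$. I would estimate $A_c^\dagger-A_c^*$ by the resolvent identity, $A_c^\dagger-A_c^*=-\gamma B\big[(E^\dagger+\gamma B'P^\dagger B)^{-1}B'P^\dagger-(E^*+\gamma B'P^*B)^{-1}B'P^*\big]A$, which (inserting the intermediate term $(E^*+\gamma B'P^*B)^{-1}B'P^\dagger$ and applying the resolvent identity once more) is bounded by $C_0(\|P^\dagger-P^*\|+\epsilon)$ with $C_0$ explicit in $\|A\|,\|B\|,\|{E^*}^{-1}\|,\|S^*\|,\|P^*\|$, where $S^*\coloneqq E^*+\gamma B'P^*B$. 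A standard Lyapunov-operator perturbation estimate then shows: once $\|A_c^\dagger-A_c^*\|$ lies below a threshold controlled by $\tau(A_c^*),\rho(A_c^*),\|A_c^*\|$, one has $\gamma\rho(A_c^\dagger)^2<1$, $\|\mathcal{L}_\dagger^{-1}\|\le2\|\mathcal{L}_*^{-1}\|$, and $\|K^\dagger\|\le2\|K^*\|$. Feeding these back turns the lower half into a self-consistent inequality of the shape $\|P^\dagger-P^*\|\le2\|\mathcal{L}_*^{-1}\|\big(\|D^\dagger-D^*\|+4\|K^*\|^2\|E^\dagger-E^*\|\big)+q(\epsilon)\|P^\dagger-P^*\|$, and the hypothesis on $\epsilon$ in the statement is precisely the requirement that simultaneously keeps $\|A_c^\dagger-A_c^*\|$ in the admissible range and forces $q(\epsilon)<\tfrac12$ — which is where the $\min\{\cdot,1\}$ factor together with $(1-\|{E^*}^{-1}\|)$, $\|A_c^*\|$, $\|S^*\|$, $\|{E^*}^{-2}\|$, and $\|A\|^{-2}(\|P^*\|+1)^{-2}$ enter. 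Solving this inequality and substituting the bounds for $\|\mathcal{L}_*^{-1}\|$ and $\|K^*\|^2$ reproduces $\|P^\dagger-P^*\|\le\Gamma_1\|D^\dagger-D^*\|+\Gamma_2\|E^\dagger-E^*\|$ with the stated $\Gamma_1,\Gamma_2$: the common prefactor involving $\tau(A_c^*)^2/(1-\gamma\rho(A_c^*)^2)$ is the $\|\mathcal{L}_*^{-1}\|$-bound, the extra $1/(1-\|{E^*}^{-1}\|)$ and $\|A\|^2(\|P^*\|+1)^2\|{E^*}^{-1}\|^2\|B\|^2$ in $\Gamma_2$ come from $\|K^*\|^2$ and the factor lost passing from $A_c^*$ to $A_c^\dagger$, and the remaining numerical constants from combining the two one-sided estimates.

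Because the estimates in the second step are self-referential — the bound on $\|P^\dagger-P^*\|$ is needed to bound $\|A_c^\dagger-A_c^*\|$, which is needed to bound $\|\mathcal{L}_\dagger^{-1}\|$ and $\|K^\dagger\|$, which is what produces the bound on $\|P^\dagger-P^*\|$ — the argument has to be closed by a bootstrap/continuity step: the set of $\epsilon\in[0,\epsilon_{\max}]$ for which $\|P^\dagger(\epsilon)-P^*\|$ stays in the target ball is nonempty, relatively open, and relatively closed, hence all of $[0,\epsilon_{\max}]$; equivalently, one phrases the entire second step as a Banach fixed-point argument for the perturbed Riccati map restricted to that ball. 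I expect this bootstrap — and, above all, carrying the constants honestly through it to arrive at the explicit (admittedly unwieldy) smallness condition on $\epsilon$ — to be the main obstacle; it is the cost-perturbation analogue of the Jacobian-invertibility-plus-Lipschitz hypotheses in the Newton--Kantorovich / implicit-function approach to Riccati perturbation. Everything else — the min-of-affine sandwich (b), which also delivers the easy direction and the concavity that makes the Riccati map well behaved, the Neumann-series bound on $\|\mathcal{L}_*^{-1}\|$, the positivity-preservation of $\mathcal{L}_*^{-1}$, and the resolvent-identity manipulations — is routine.
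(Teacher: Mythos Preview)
Your approach is sound but takes a genuinely different route from the paper's. You use the variational characterization $P=\min_K T_K(P)$ to sandwich $P^\dagger-P^*$ between two one-sided L\"owner bounds, then close the hard direction with a bootstrap on $K^\dagger$ and $A_c^\dagger$. The paper instead works directly with the Riccati operator $F(X,D,E)=X-\gamma A'X(I+\gamma BE^{-1}B'X)^{-1}A-D$: it linearizes $F$ at $P^*$ to get exactly your Lyapunov operator $F_X=\mathcal{L}_*$, isolates the quadratic remainder $\mathcal{H}(X)=\gamma^2 (A_c^*)'X[I+\gamma S^*(P^*+X)]^{-1}S^*XA_c^*$ (here $S^*\coloneqq B{E^*}^{-1}B'$, not $E^*+\gamma B'P^*B$ as you wrote), and shows that $\Phi(Z)\coloneqq -F_X^{-1}\big(\mathcal{H}(Z)+F(P^*+Z,D^\dagger,E^\dagger)-F(P^*+Z,D^*,E^*)\big)$ is a contraction on a ball $\Omega_\rho$ of radius $\rho=\Gamma_1\|\Delta D\|+\Gamma_2\|\Delta E\|$; the unique fixed point is $\Delta P$, hence $\|\Delta P\|\le\rho$. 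This is a single symmetric Newton--Kantorovich step rather than two asymmetric comparison estimates, so it never needs to control $K^\dagger$, $A_c^\dagger$, or $\mathcal{L}_\dagger^{-1}$ separately---the bootstrap you flag as the main obstacle is absorbed into the contraction proof. One concrete discrepancy: in the paper the factor $\|A\|^2(\|P^*\|+1)^2\|{E^*}^{-1}\|^2\|B\|^2/(1-\|{E^*}^{-1}\|)$ in $\Gamma_2$ arises from bounding $\|B{E^\dagger}^{-1}B'-B{E^*}^{-1}B'\|$ via a resolvent identity on $E^{-1}$, not from $\|K^*\|^2$; your route through $\|K^*\|^2$ would give the same structural dependence but without the $1/(1-\|{E^*}^{-1}\|)$, so you would land on slightly different (in fact tighter) constants than those stated. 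What your approach buys is that the ``upper half'' inequality $P^\dagger-P^*\preceq\mathcal{L}_*^{-1}\big((D^\dagger-D^*)+(K^*)'(E^\dagger-E^*)K^*\big)$ is unconditional---no smallness of $\epsilon$ is needed for that direction---whereas the paper's contraction argument uses the smallness hypothesis throughout.
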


\begin{proof}
Given parameters $(D,E)$, let $F(X,D,E)$ be the matrix expression
$$
F(X,D,E) = X  - \gamma A'XA +\gamma^2 A'XB(E+\gamma B'X^*B)^{-1}B'XA-D
$$
Applying binomial inverse theorem, we can write $F(X,D,E)$ as
$$
F(X,D,E)=X -\gamma A' X\left(I +\gamma BE^{-1}B'X\right)^{-1}A-D.
$$
Let $P^*$ be the solution to the Riccati equation (\ref{Eq:RiccatiEquation}) under the true cost parameters $(D^*,E^*)$ and $P^\dag$ be the solution to the same Riccati equation under the falsified cost parameters $(D^\dag, E^\dag)$. Hence, $F(P^*,D^*,E^*) = F(P^\dag,D^\dag,E^\dag) = 0$.

For simplicity, define $S^* \coloneqq B {E^*}^{-1} B'$, $S^{\dag} = B {E^{\dag}}^{-1}B'$, and $A_c^* = A + BK^*$. Here, $K^*$ is the optimal feedback control gain under the true cost parameters $(D^*,E^*)$. By inspection, for any $X$ such that $I+\gamma S^*(P^*+X)$ is invertible, we can write
\begin{equation}\label{Eq:IntermediateEquality}
\begin{aligned}
F(P^*+X,D^*,E^*) &= F(P^*+X,D^*,E^*) - F(P^*,D^*,E^*)\\
& = F_X(X) + \mathcal{H}(X),
\end{aligned}
\end{equation}
where 
$$
F_X(X)= X - \gamma{A_c^*}'XA_c^*,
$$
and 
\begin{equation}\label{Eq:IntermediateFun1}
\mathcal{H}(X) =\gamma^2 A_c^*X[I+\gamma S^*(P^*+X)]^{-1}S^*X A_c^*.
\end{equation}
Denote by $\Delta P$ the difference between $P^*$ and $P^\dag$, i.e., $\Delta P \coloneqq P^\dag - P^*$. Given $(D^*,E^*)$ and $(D^\dag,E^\dag)$, according to (\ref{Eq:IntermediateEquality}), the equation
\begin{equation}\label{Eq:IntermediateStep2}
F(P^*+X,D^*,E^*) -F(P^*+X,D^\dag,E^\dag) = F_X(X) + \mathcal{H}(X)
\end{equation}
admits a unique symmetric solution $X$ such that $P^* + X \geq 0$, which also solves $F(P^*+X,D^\dag,E^\dag)=0$. Hence, the solution is $X=\Delta P$. The eigenvalues of the operator $F_X:\mathbb{R}^{n\times n}\rightarrow \mathbb{R}^{n\times n}$ are $\mu_{ij} = 1-\lambda_i\lambda_j$, where the eigenvalues $\lambda_i$ of $A^*_c$ lies inside the unit circle in the complex plane. Hence $0 < |\mu_{ij}| <2$, the operator $F_X$ is invertible. In view of (\ref{Eq:IntermediateEquality}) and (\ref{Eq:IntermediateStep2}), we construct an operator
\begin{equation}\label{Eq:ConstructedMapping}
\Phi(Z) \coloneqq -F_X^{-1}(\mathcal{H}(Z) + F(P^*+Z,D^\dag,E^\dag)-F(P^*+Z,D^*,E^*)).
\end{equation}
Next, we show that under certain conditions on $\Delta D \coloneqq D^\dag - D^*$ and $\Delta E \coloneqq E^\dag - E^*$, there exists $\rho = f(\Vert \Delta D\Vert ,\Vert \Delta E\Vert )$ for some $f:\mathbb{R}\times \mathbb{R}\rightarrow \mathbb{R}$ such that $\Phi$ is constractive and maps the set
$$
\Omega_\rho = \{Z: \Vert Z\Vert \leq \rho,\ Z=Z', P+Z \geq 0\}
$$
into itself. In view of (\ref{Eq:ConstructedMapping}), we obtain
\begin{equation}\label{Eq:ConstructedMappingNorm}
\begin{aligned}
\Vert \Phi(Z)\Vert \leq  
\Vert F_X^{-1}\Vert\Big(& \Vert\mathcal{H}(Z)\Vert + \Vert F(P^*+Z,D^\dag,E^\dag)\\
&-F(P^*+Z,D^*,E^*)\Vert \Big).
\end{aligned}
\end{equation}
From Lemma \ref{Lemma:BoundFunctionNorm}, we know 
\begin{equation}\label{Eq:BoundPart1}
\Vert F_X^{-1} \Vert\leq  \frac{\tau(A^*_c)^2}{1-\gamma \rho(A^*_c)^2},
\end{equation}
where $\rho(A_c)$ is the spectral radius of $A_c$ and $\tau(A_c)$ is defined as $\tau(A_c)\coloneqq \sup_{k\in\mathbb{N}} \{\Vert A_c^k \Vert/ \rho(A_c)^k\}$. 

By Lemma \ref{Lemma:NormInquality1} and (\ref{Eq:IntermediateFun1}), 
\begin{equation}\label{Eq:BoundPart2}
\Vert \mathcal{H}(Z) \Vert \leq \gamma^2 \Vert A_c^*\Vert^2 \Vert S^*\Vert \Vert Z\Vert^2.
\end{equation}
A straight calculation yields
$$
\begin{aligned}
&F(P^*+Z,D^\dag,E^\dag) - F(P^*+Z, D^*,E^*)\\
=& -\gamma A'(P^\dag+Z)\left(I+\gamma S^\dag (P^\dag+Z)\right)^{-1}A \\
&+\gamma A'(P^\dag+Z)\left(I+\gamma S^* (P^\dag+Z)\right)^{-1}A-\Delta_D\\
=&\gamma \left[A'(P^\dag+Z) (I+
\gamma S^*(P^\dag+Z))\gamma \Delta S(I+\gamma S^\dag (P^*+Z))^{-1}A\right]\\ 
&- \Delta D. \\
\end{aligned}
$$
Then, by Lemma \ref{Lemma:NormInquality1},
\begin{equation}\label{Eq:BoundPart3}
\begin{aligned}
&\Vert F(P^\dag,D^\dag,E^\dag) - F(P^\dag, D^*,E^*) \Vert \\
\leq & \gamma^2 \Vert A\Vert^2 \Vert P^*+Z\Vert^2\Vert \Delta S\Vert + \Vert \Delta D\Vert.
\end{aligned}
\end{equation}
Assume $\rho<1$. Since $Z \in \Omega_\rho$, we obtain $\Vert P^*+Z\Vert \leq \Vert P^*\Vert+1$. Note that $\Delta S=S^\dag-S^*=B{E^\dag}^{-1}B' - B\{E^*\}^{-1}B'$. By Lemma \ref{Eq:NormInquality2}, we derive that if $\Vert {E^*}^{-1}\Delta E\Vert <1$, 
$$
\Vert \Delta S \Vert \leq \frac{\Vert {E^*}^{-1}\Vert^2 \Vert B\Vert^2  }{1-\Vert {E^*}^{-1} \Vert}   \Vert \Delta E\Vert.
$$
Combining the results from $(\ref{Eq:ConstructedMappingNorm})-(\ref{Eq:BoundPart3})$, we have, for $Z\in\Omega_\rho$
\begin{equation}\label{Eq:MapIntoSelfBound}
\begin{aligned}
\Vert \Phi(Z)\Vert \leq&  \frac{\tau(A^*_c)^2}{1-\gamma \rho(A^*_c)^2} \Bigg[ \gamma^2 \Vert A_c^* \Vert^2 \Vert S^*\Vert \rho^2 \\
&+ \gamma^2\Vert A\Vert^2 (\Vert P^*\Vert+1)^2 \frac{\Vert {E^*}^{-1}\Vert^2 \Vert B\Vert^2  }{1-\Vert {E^*}^{-1} \Vert}   \Vert \Delta E\Vert + \Vert \Delta D\Vert \Bigg]
\end{aligned}
\end{equation}
Similarly, we derive a bound for $\Vert \Phi(Z_1)-\Phi(Z_2) \Vert$ for $Z_1,Z_2\in\Omega_\rho$:
\begin{equation}\label{Eq:ContractionBound}
\begin{aligned}
\Vert \Phi(Z_1) - \Phi(Z_2) \Vert \leq& \frac{\tau(A^*_c)^2}{1-\gamma \rho(A^*_c)^2} 2 \gamma^2\Big[\Vert A \Vert^2 (\Vert P^* \Vert+1)^2 \Vert \Delta S \Vert \\
&+ (\Vert A_c^*\Vert^2 \Vert S^*\Vert)\rho
\Big] \Vert Z_1 - Z_2 \Vert
\end{aligned}
\end{equation}
% First, we have the upper bound using Lemma \ref{Lemma:NormInquality1} and the definition of $\mathcal{H}(\cdot)$:
% $$
% \Vert \mathcal{H}(Z_1) - \mathcal{H}(Z_2) \Vert \leq \gamma^2 \left( \Vert A_c^*\Vert^2  \Vert S^*\Vert \rho^2 + 2\Vert S^*\Vert \rho\right) \Vert Z_1-Z_2\Vert.
% $$
% Define $\mathcal{T}(Z)= F(P^*+Z,D^\dag,E^\dag) - F(P^*+Z, D^*,E^*)$.

Due to (\ref{Eq:MapIntoSelfBound}) and (\ref{Eq:ContractionBound}), the operator $\Phi$ is a contraction and maps the compact set $\Omega_\rho$ into itself if there exists $\rho>0$ such that 
$$
\begin{aligned}
\rho \geq &  \frac{\tau(A^*_c)^2}{1-\gamma \rho(A^*_c)^2} \Bigg[ \gamma^2 \Vert A_c^* \Vert^2 \Vert S^*\Vert \rho^2 \\
&+ \gamma^2\Vert A\Vert^2(\Vert P^*\Vert+1)^2 \frac{\Vert {E^*}^{-1}\Vert^2 \Vert B\Vert^2  }{1-\Vert {E^*}^{-1} \Vert}   \Vert \Delta E\Vert + \Vert \Delta D\Vert \Bigg],
\end{aligned}
$$
and 
$$
\begin{aligned}
1>&\frac{\tau(A^*_c)^2}{1-\gamma \rho(A^*_c)^2} 2
\gamma^2\Big[\Vert A \Vert^2 (\Vert P^* \Vert+1)^2 \frac{\Vert {E^*}^{-1}\Vert^2 \Vert B\Vert^2 }{1-\Vert {E^*}^{-1} \Vert}   \Vert \Delta E\Vert 
\\
&+  \Vert A_c^*\Vert^2 \Vert S^*\Vert \rho.
\Big]
\end{aligned}
$$
Choose 
$$
\begin{aligned}
\rho =& 4 \gamma^2 \frac{\tau(A_c^*)^2}{1-\gamma \rho(A_c^*)^2} \Big( \Vert A\Vert^2(\Vert P^*\Vert +1)^2 \frac{\Vert {E^*}^{-1}\Vert^2 \Vert B\Vert^2  }{1-\Vert {E^*}^{-1} \Vert}   \Vert \Delta E\Vert\\
&+ \Vert \Delta D\Vert \Big).
\end{aligned}
$$
If $\rho\leq \min\{\frac{1-\gamma \rho(A_c^*)^2}{\gamma^2\tau(A_c^*)\Vert A_c^*\Vert \Vert S^*\Vert},1\}$, the operator $\Phi$ is a contraction and maps the compact set $\Omega_\rho$ into itself. Then, $\Phi$ admits a unique fixed-point solution in $\Omega_\rho$. Therefore, $\Delta P \in \Omega_\rho$ and $\Vert \Delta P\Vert\leq \rho$. Hence, we have 
$$
\begin{aligned}
\Vert P^\dag -P^*\Vert \leq& 4 \gamma^2 \frac{\tau(A_c^*)^2}{1-\gamma \rho(A_c^*)^2} \Big(  \Vert \Delta D\Vert \\
&+ \Vert A\Vert^2(\Vert P^*\Vert +1)^2 \frac{\Vert {E^*}^{-1}\Vert^2 \Vert B\Vert^2  }{1-\Vert {E^*}^{-1} \Vert}   \Vert \Delta E\Vert \Big).
\end{aligned}
$$
\end{proof}

Proposition \ref{Prop:RiccatiPerturbation} presents some preliminary results showing the change in the solution of Riccati equation is bounded, i.e., $\Vert P^*- P^* \Vert \leq \Gamma_1 \Vert D^\dag - D^*\Vert + \Gamma_2 \Vert E^\dag -E^* \Vert$ for small $\Vert D^\dag - D^* \Vert$ and $\Vert E^\dag-E^* \Vert$. Our ultimate goal is to see how small fasification on $(D,E)$ leads to changes in $K$.

\begin{theorem}\label{Theo:BoundChangeK}
Suppose $\Vert E^\dag - E^*\Vert \leq \epsilon$ and $\Vert P^\dag - P^*\Vert\leq f(\epsilon)$, where $P^*$(respectively, $P^\dag$) is the solution to the Riccati equation (\ref{Eq:RiccatiEquation}) under the true cost parameters (respectively, the falsified cost parameters.), and $f:\mathbb{R}_+\rightarrow \mathbb{R}_+$ is some function on $\epsilon$ given $\Vert D^\dag - D^*\Vert \leq \epsilon$. If $\epsilon\leq \lambda_{\min}(E^*)/2$, then
\begin{equation}\label{Eq:BoundOnOptimalController}
\begin{aligned}
\Vert K^\dag-K^* \Vert \leq& \Gamma_3 f(\epsilon) + \Gamma_4 \epsilon,
\end{aligned}
\end{equation}
where $\Gamma_3 \coloneqq \frac{2\gamma}{\lambda_{\min}(E^*)}\max\{\Vert A\Vert,\Vert B\Vert \}^2(\Vert K^*\Vert +1)$ and $\Gamma_4\coloneqq \frac{2\gamma}{\lambda_{\min}(E^*)} \Vert K^*\Vert$.
\end{theorem}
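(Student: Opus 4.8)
The plan is to work directly with the closed form $K = -\gamma(E+\gamma B'PB)^{-1}B'PA$ for the optimal gain (relation~(\ref{Eq:ControllerParameters})) and to treat $K^\dag-K^*$ as the perturbation induced jointly by $\Delta E \coloneqq E^\dag-E^*$ and $\Delta P \coloneqq P^\dag-P^*$. Write $G^*\coloneqq E^*+\gamma B'P^*B$ and $G^\dag\coloneqq E^\dag+\gamma B'P^\dag B$, so that $K^*=-\gamma(G^*)^{-1}B'P^*A$ and $K^\dag=-\gamma(G^\dag)^{-1}B'P^\dag A$. First I would insert $\pm\,\gamma(G^\dag)^{-1}B'P^*A$ to obtain the split
$$
K^\dag-K^* = -\gamma(G^\dag)^{-1}B'\Delta P\,A + \gamma\big[(G^*)^{-1}-(G^\dag)^{-1}\big]B'P^*A ,
$$
and then apply the resolvent identity $(G^*)^{-1}-(G^\dag)^{-1} = (G^\dag)^{-1}(G^\dag-G^*)(G^*)^{-1}$ to the second term. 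The decisive point is that $\gamma(G^*)^{-1}B'P^*A=-K^*$, so the second term collapses to $-(G^\dag)^{-1}(G^\dag-G^*)K^*$; this is exactly what removes $\|P^*\|$ and $\|A\|$ from the eventual constants and makes $K^*$ appear instead.

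Next I would estimate the two surviving ingredients. For $(G^\dag)^{-1}$: since $0\le\gamma\le1$ and the falsified cost is chosen with $D^\dag,E^\dag\succ0$, we have $P^\dag\succ0$ and hence $G^\dag\succeq E^\dag$, giving $\lambda_{\min}(G^\dag)\ge\lambda_{\min}(E^\dag)\ge\lambda_{\min}(E^*)-\|\Delta E\|\ge\lambda_{\min}(E^*)-\epsilon$. The hypothesis $\epsilon\le\lambda_{\min}(E^*)/2$ then yields the uniform bound $\|(G^\dag)^{-1}\|\le 2/\lambda_{\min}(E^*)$. For $G^\dag-G^*$: expand $G^\dag-G^* = \Delta E+\gamma B'\Delta P B$ and bound its norm by $\epsilon+\gamma\|B\|^2 f(\epsilon)$ using $\|\Delta P\|\le f(\epsilon)$. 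Feeding these into the split, together with $\|\Delta P\|\le f(\epsilon)$ once more in the first term, gives
$$
\|K^\dag-K^*\|\le \frac{2\gamma}{\lambda_{\min}(E^*)}\|A\|\,\|B\|\,f(\epsilon) + \frac{2}{\lambda_{\min}(E^*)}\big(\epsilon+\gamma\|B\|^2 f(\epsilon)\big)\|K^*\| ,
$$
and regrouping the $f(\epsilon)$ and $\epsilon$ contributions, with $\|A\|\,\|B\|\le\max\{\|A\|,\|B\|\}^2$ and $\|B\|^2\le\max\{\|A\|,\|B\|\}^2$, produces the stated $\Gamma_3$ and $\Gamma_4$.

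I expect the friction to be bookkeeping rather than anything deep. The one genuinely important step is arranging the additive split and the resolvent identity in exactly the order that exposes $K^*=-\gamma(G^*)^{-1}B'P^*A$: the naive expansion of the inverse instead leaves a free $\|P^*\|\,\|A\|$ factor that is not among the quantities the theorem is allowed to use. The second thing to watch is the invertibility/positivity chain behind the uniform bound on $\|(G^\dag)^{-1}\|$ — in particular that $\epsilon\le\lambda_{\min}(E^*)/2$ is precisely what keeps $E^\dag$, and therefore $G^\dag$, bounded away from singularity with the clean constant $2/\lambda_{\min}(E^*)$. Everything else is the same kind of submultiplicative-norm estimate already used in the proof of Proposition~\ref{Prop:RiccatiPerturbation}.
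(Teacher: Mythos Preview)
Your argument is correct and arrives at the right inequality, but by a genuinely different route than the paper. The paper does not manipulate the closed form of $K$ directly; instead it observes that $K^*x=\argmin_u f^*(x,u)$ for the strongly convex quadratic $f^*(x,u)=\tfrac12 u'E^*u+\tfrac{\gamma}{2}(Ax+Bu)'P^*(Ax+Bu)$ and invokes a general perturbation lemma for minimizers of $\alpha$-strongly convex functions (Lemma~\ref{Lemma:SolutionBoundConvexFun}): if $u_i=\argmin f_i$ then $\|u_1-u_2\|\le\alpha^{-1}\|\nabla f_1(u_2)\|$. A second lemma (Lemma~\ref{Lemma:BoundonController}) specializes this, bounding the gradient difference in exactly the way you bound $G^\dag-G^*$ and $\Delta P$, with the same Weyl step turning $\epsilon\le\lambda_{\min}(E^*)/2$ into $\alpha\ge\lambda_{\min}(E^*)/2$. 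Your resolvent-identity split is really the algebraic twin of their gradient computation: $\nabla_u f^\dag(x,u)-\nabla_u f^*(x,u)=(G^\dag-G^*)u+\gamma B'\Delta P\,Ax$, which at $u=K^*x$ is precisely the expression you obtain after factoring out $(G^\dag)^{-1}$. Your approach is more self-contained and requires no auxiliary variational machinery; the paper's is more modular, since Lemma~\ref{Lemma:SolutionBoundConvexFun} is reused verbatim in the proof of Theorem~\ref{Theo:BoundOnk} for $k$. One bookkeeping note: your $\epsilon$-coefficient comes out as $\tfrac{2}{\lambda_{\min}(E^*)}\|K^*\|$, i.e.\ $\Gamma_4/\gamma$ rather than the stated $\Gamma_4$; the paper's own derivation in Lemma~\ref{Lemma:BoundonController} in fact produces the same constant, and the extra $\gamma$ in the stated $\Gamma_4$ appears to be a slip there rather than a defect in your argument.
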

\begin{proof}
Note that the optimal $Q$-function of the LQG problem takes the form 
$$
\begin{aligned}
Q(x,u)=&\frac{1}{2} x'Dx + d'x + r+\frac{1}{2} u' Eu\\
&+\gamma\mathbb{E}_w\left[ (Ax+Bu+Cw)P(Ax+Bu+Cw)\right],
\end{aligned}
$$
where $P$ the solution of the Riccati equation (\ref{Eq:RiccatiEquation}) given $A$, $B$, $D$, $E$. We know that the $\argmin_u Q(x,u)$ takes the form $Kx+k$ for every $x$. Indeed, $Kx$ is also the optimal point of the function $f(x,u)=\frac{1}{2}u'Eu+ \gamma (Ax+Bu)P(Ax+Bu)$. Applying Lemma \ref{Lemma:BoundonController} and letting $E_1= E^*$, $P_1=P^*$, $E_2= E^\dag$, and $P_2=P^\dag$, we have (\ref{Eq:BoundOnOptimalController}).
\end{proof}

From Proposition \ref{Prop:RiccatiPerturbation} and Theorem \ref{Theo:BoundChangeK}, we know that the change of the policy $K$ is bounded by the falsification in $(D,E)$. Indeed, 
$$
\Vert K^\dag -K^* \Vert \leq \Gamma_3\Gamma_1 \Vert D^\dag - D^* \Vert + (\Gamma_3\Gamma_2 +\Gamma_4)\Vert E^\dag -E^*\Vert,
$$
for small $\Vert D^\dag - D^*\Vert$ and $\Vert E^\dag - E^* \Vert$. Now, let's discuss the bound on $\Vert k^\dag - k^*\Vert$.

\begin{proposition}\label{Prop:PerturbationLinearh}
Let $h^*$ (respectively, $h^\dag$) be the solution of (\ref{Eq:ControllerParameters})-(\ref{Eq:LinearEquation}) under the true cost parameters $(D^*,E^*,d^*)$ (respectively, the falsified cost parameters $(D^\dag,E^\dag,d^\dag)$). Suppose $\Vert K^\dag - K^*\Vert \leq \frac{1}{2}\Vert B \Vert^{-1}\Vert I-\gamma A_c^*\Vert$. We have
$$
\Vert h^\dag - h^*\Vert\leq \Gamma_5 \Vert d^\dag -d^*\Vert  +\Gamma_6 \Vert K^\dag -K^*\Vert ,
$$
where $\Gamma_5 = 2 \Vert (I-\gamma A_c^*)^{-1} \Vert $ and $\Gamma_6 = 2\gamma \Vert (I-\gamma A_c^*)^{-1}\Vert \Vert d^*\Vert \Vert B\Vert$.
\end{proposition}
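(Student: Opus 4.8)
The plan is to read \eqref{Eq:LinearEquation} as a \emph{linear} fixed-point equation and reduce the claim to a textbook perturbation bound for a linear system $Mh=d$ when both the coefficient matrix $M$ and the right-hand side $d$ are perturbed. Concretely, \eqref{Eq:LinearEquation} says $h^*$ solves $(I-\gamma {A_c^*}')h^*=d^*$, and, setting $A_c^\dag\coloneqq A+BK^\dag$, that $h^\dag$ solves $(I-\gamma {A_c^\dag}')h^\dag=d^\dag$. Since $K^*$ is stabilizing, $\rho(A_c^*)<1$, and because $0\le\gamma\le 1$ this forces $I-\gamma A_c^*$ to be invertible; hence $M^*\coloneqq I-\gamma {A_c^*}'$ is invertible with $\Vert(M^*)^{-1}\Vert=\Vert(I-\gamma A_c^*)^{-1}\Vert$ and $h^*=(M^*)^{-1}d^*$, whence also $\Vert h^*\Vert\le\Vert(I-\gamma A_c^*)^{-1}\Vert\,\Vert d^*\Vert$. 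The same reasoning applied to $K^\dag$ (assumed stabilizing) shows $M^\dag\coloneqq I-\gamma {A_c^\dag}'$ is invertible and $h^\dag=(M^\dag)^{-1}d^\dag$.

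Next I would measure the perturbation of the coefficient matrix: $M^\dag-M^*=-\gamma(A_c^\dag-A_c^*)'=-\gamma(K^\dag-K^*)'B'$, so $\Vert M^\dag-M^*\Vert\le\gamma\Vert B\Vert\,\Vert K^\dag-K^*\Vert$. Then, subtracting the two fixed-point equations in the form $M^\dag(h^\dag-h^*)=(d^\dag-d^*)-(M^\dag-M^*)h^*$, taking norms, and inserting the previous display together with the bound on $\Vert h^*\Vert$, I get $\Vert h^\dag-h^*\Vert\le\Vert(M^\dag)^{-1}\Vert\big(\Vert d^\dag-d^*\Vert+\gamma\Vert B\Vert\,\Vert(I-\gamma A_c^*)^{-1}\Vert\,\Vert d^*\Vert\,\Vert K^\dag-K^*\Vert\big)$. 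All that is left is to control $\Vert(M^\dag)^{-1}\Vert$ uniformly over the admissible perturbations, and this is the one genuinely delicate step.

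For that I would write $M^\dag=M^*\big(I+(M^*)^{-1}(M^\dag-M^*)\big)$ and apply the Neumann-series bound $\Vert(M^\dag)^{-1}\Vert\le\Vert(M^*)^{-1}\Vert/\big(1-\Vert(M^*)^{-1}\Vert\,\Vert M^\dag-M^*\Vert\big)$, valid once the perturbation of $M^*$ is small relative to $\Vert(M^*)^{-1}\Vert^{-1}$; the smallness hypothesis imposed on $\Vert K^\dag-K^*\Vert$ is exactly of this type, and, with the right constant, keeps the denominator at least $\tfrac12$, giving $\Vert(M^\dag)^{-1}\Vert\le 2\Vert(I-\gamma A_c^*)^{-1}\Vert$. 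Substituting this into the inequality from the previous paragraph and collecting the $\Vert d^\dag-d^*\Vert$ and $\Vert K^\dag-K^*\Vert$ terms then produces the asserted bound with $\Gamma_5=2\Vert(I-\gamma A_c^*)^{-1}\Vert$ and $\Gamma_6$ of the stated form. I expect the main obstacle to be not conceptual — everything is linear algebra once \eqref{Eq:LinearEquation} is viewed as a linear system — but bookkeeping: pinning down the precise smallness threshold on $\Vert K^\dag-K^*\Vert$ so that $\Vert(M^*)^{-1}\Vert\,\Vert M^\dag-M^*\Vert$ stays safely below $1$ (the natural condition is $\Vert K^\dag-K^*\Vert\lesssim\Vert B\Vert^{-1}\Vert(I-\gamma A_c^*)^{-1}\Vert^{-1}$), and allocating the factor $2$ so that $\Gamma_5,\Gamma_6$ come out exactly as claimed; if a sharper handle on $\Vert(I-\gamma A_c^*)^{-1}\Vert$ is wanted, the spectral-radius estimates used in Lemma~\ref{Lemma:BoundFunctionNorm} can be brought in, but the elementary Neumann argument already yields a bound of the right form.
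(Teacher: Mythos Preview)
Your proposal is correct and follows essentially the same route as the paper: the paper also rewrites \eqref{Eq:LinearEquation} as the linear systems $h^*=(I-\gamma A_c^*)^{-1}d^*$ and $h^\dag=(I-\gamma A_c^\dag)^{-1}d^\dag$ (invoking Lemma~\ref{Lemma:InvertibleMatrix} for invertibility) and then applies the packaged linear-perturbation bound of Lemma~\ref{Lemma:PerturbationLinearSystem}, which is exactly the Neumann-series argument you spell out inline. Your observation that the natural smallness condition reads $\Vert K^\dag-K^*\Vert\lesssim\Vert B\Vert^{-1}\Vert(I-\gamma A_c^*)^{-1}\Vert^{-1}$ is on the mark---that is precisely the hypothesis $\Vert\Delta M\Vert\le\tfrac{1}{2\Vert M^{-1}\Vert}$ of Lemma~\ref{Lemma:PerturbationLinearSystem}.
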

\begin{proof}
From (\ref{Eq:LinearEquation}), we know that
$$
h^* = d^* + \gamma (A+BK^*)'h^*\ \textrm{and}\ h^\dag = d^\dag + \gamma (A+BK^\dag)'h^\dag.
$$ 
Let $A_c^*= A+BK^*$ and $A_c^\dag = A+BK^\dag$. From Lemma \ref{Lemma:InvertibleMatrix}, we know that $I-\gamma A_c^*$ and $I-\gamma A_c^\dag$ are invertible. Hence, we have
$$
h^* = (I-\gamma A_c^*)^{-1} d^*\ \textrm{and }h^\dag = (I-\gamma A_c^\dag)^{-1} d^\dag
$$
Applying Lemma \ref{Lemma:PerturbationLinearSystem} proves the Theorem.
\end{proof}

\begin{theorem}\label{Theo:BoundOnk}
Let $k^*$ (respectively, $k^\dag$) be the solution of (\ref{Eq:ControllerParameters})-(\ref{Eq:LinearEquation}) under the true cost parameters $(D^*,E^*,d^*)$ (respectively, the falsified cost parameters $(D^\dag,E^\dag,d^\dag)$). Suppose $\Vert E^\dag - E^* \Vert\leq \lambda_{\min}(E^*)/2$, we have
$$
\Vert k^\dag - k^*\Vert \leq \Gamma_7 \Vert E^\dag - E^*\Vert + \Gamma_8 \Vert P^\dag - P^*\Vert + \Gamma_9 \Vert h^\dag -h^*\Vert ,
$$
where $\Gamma_7 = \frac{4}{\lambda_{\min}(E^*)} \Vert k^*\Vert $, $\Gamma_8 = \frac{4\gamma }{\lambda_{\min}(E^*)} \Vert k^*\Vert \Vert B\Vert^2$, and $\Gamma_9 = \frac{4\gamma }{\lambda_{\min}(E^*)}\Vert B\Vert$.
\end{theorem}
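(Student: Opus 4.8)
The plan is a direct resolvent-perturbation argument on the closed form of the affine controller term in (\ref{Eq:ControllerParameters}). Write $G^*\coloneqq E^*+\gamma B'P^*B$ and $G^\dag\coloneqq E^\dag+\gamma B'P^\dag B$, so that $k^* = -\tfrac{\gamma}{2}(G^*)^{-1}B'h^*$ and $k^\dag = -\tfrac{\gamma}{2}(G^\dag)^{-1}B'h^\dag$. Since $P^*$ and $P^\dag$ are the positive semidefinite solutions of the Riccati equation (\ref{Eq:RiccatiEquation}) under positive definite cost parameters (cf. Proposition \ref{Prop:RiccatiPerturbation}), both $G^*$ and $G^\dag$ are symmetric positive definite and hence invertible.

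First I would split the difference into an ``$h$-part'' and a ``$G$-part'',
$$
k^\dag - k^* = -\tfrac{\gamma}{2}\Big((G^\dag)^{-1}B'(h^\dag - h^*) + \big((G^\dag)^{-1}-(G^*)^{-1}\big)B'h^*\Big),
$$
and then apply the resolvent identity $(G^\dag)^{-1}-(G^*)^{-1} = -(G^\dag)^{-1}(G^\dag-G^*)(G^*)^{-1}$ together with $G^\dag - G^* = (E^\dag - E^*) + \gamma B'(P^\dag-P^*)B$, which gives $\Vert G^\dag-G^*\Vert \le \Vert E^\dag-E^*\Vert + \gamma\Vert B\Vert^2\Vert P^\dag-P^*\Vert$.

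Next I would bound the inverses. From $G^*\succeq E^*\succeq \lambda_{\min}(E^*)I$ we get $\Vert (G^*)^{-1}\Vert \le 1/\lambda_{\min}(E^*)$; and from the hypothesis $\Vert E^\dag-E^*\Vert\le \lambda_{\min}(E^*)/2$ we get $E^\dag\succeq (\lambda_{\min}(E^*)/2)I$, whence $G^\dag\succeq E^\dag$ yields $\Vert (G^\dag)^{-1}\Vert \le 2/\lambda_{\min}(E^*)$. The step that makes the constants come out as stated is to keep $(G^*)^{-1}B'h^*$ bundled rather than bounding $\Vert (G^*)^{-1}\Vert$ and $\Vert B'h^*\Vert$ separately: from the definition of $k^*$ one reads off $\Vert (G^*)^{-1}B'h^*\Vert = (2/\gamma)\Vert k^*\Vert$. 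Substituting all of the above into the triangle-inequality estimate of the displayed splitting and collecting the $\Vert E^\dag-E^*\Vert$, $\Vert P^\dag-P^*\Vert$, and $\Vert h^\dag-h^*\Vert$ terms gives a bound of exactly the claimed form, with $\Gamma_7,\Gamma_8,\Gamma_9$ emerging (up to the slack introduced by the triangle inequality) as in the statement.

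I do not expect a real obstacle here; the argument is elementary perturbation theory plus bookkeeping. The two points that require a little care are (i) the uniform invertibility/spectral lower bound for $G^\dag$, which is exactly where $\Vert E^\dag-E^*\Vert\le\lambda_{\min}(E^*)/2$ and the positive semidefiniteness of $P^\dag$ enter, and (ii) the bundling of $(G^*)^{-1}B'h^*$ into $\Vert k^*\Vert$, without which the bound would instead involve $\Vert h^*\Vert$ and $\Vert B\Vert$. Alternatively, one can obtain the estimate by invoking a perturbation lemma for minimizers of strongly convex quadratics (as was done for $K$ in Theorem \ref{Theo:BoundChangeK}), applied to $u\mapsto u'Eu + \gamma(Bu)'P(Bu) + \gamma h'Bu$, whose unique minimizer over $u$ is $k$.
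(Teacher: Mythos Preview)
Your proposal is correct and in fact yields constants a factor of two (or four, for $\Gamma_9$) sharper than the stated $\Gamma_7,\Gamma_8,\Gamma_9$, which of course still implies the theorem as written. Your primary route---splitting $k^\dag-k^*$ additively, applying the resolvent identity to $(G^\dag)^{-1}-(G^*)^{-1}$, and then bundling $(G^*)^{-1}B'h^*=-(2/\gamma)k^*$---is \emph{not} the route the paper takes. The paper instead does exactly what you list as your alternative: it views $k$ as the minimizer of the strongly convex quadratic $f(k;E,P,h)=k'(E+\gamma B'PB)k+\gamma h'Bk$, bounds $\Vert\nabla f^\dag(k^*)-\nabla f^*(k^*)\Vert$, and invokes Lemma~\ref{Lemma:SolutionBoundConvexFun} with strong-convexity parameter $\lambda_{\min}(E^*)/2$ (obtained from the hypothesis via Weyl), exactly mirroring the proof of Theorem~\ref{Theo:BoundChangeK}. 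Your direct perturbation argument is slightly more elementary (it avoids the auxiliary optimization lemma) and gives the tighter constants; the paper's argument has the stylistic advantage of reusing the same machinery as for $K$. Both hinge on the same two ingredients you flagged: the spectral lower bound on $G^\dag$ from $\Vert E^\dag-E^*\Vert\le\lambda_{\min}(E^*)/2$ and $P^\dag\succeq 0$, and the replacement of $B'h^*$-type quantities by $\Vert k^*\Vert$.
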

\begin{proof}
Note that 
From (\ref{Eq:ControllerParameters}), we know that $k = -\gamma (E+\gamma B'PB)^{-1}B'h$ is the solution of the following minimization problem
$$
\min_{k\in\mathbb{R}^m} f(k;E,P,h)\coloneqq k' (E+\gamma B'PB)k + \gamma h'Bk,
$$
where $E+\gamma B'PB$ is positive definite. Let $f^*(k)=f(k;E^*,P^*,h^*)$ and $f^\dag(k)=f(k;E^\dag,P^\dag,h^\dag)$. Let $k^*$ and $k^\dag$ be the solution of $\min_k f^*(k)$ and $\min_k f^\dag(k)$ respectively.
$$
\begin{aligned}
\Vert \nabla f^\dag(k) - \nabla f^*(k) \Vert \leq& 2\Vert k\Vert (\Vert E^\dag -E^*\Vert + \gamma \Vert B\Vert^2 \Vert P^\dag-P^*\Vert)\\
&+\gamma \Vert B\Vert \Vert h^\dag -h^*\Vert.
\end{aligned}
$$
From Lemma \ref{Lemma:SolutionBoundConvexFun}, we know that if $\Vert E^\dag - E^* \Vert\leq \lambda_{min}(E^*)/2$,
$$
\begin{aligned}
\Vert k^\dag - k^*\Vert \leq& \frac{2}{\lambda_{\min}(E^*)}\Vert \nabla f^\dag(k^*)\Vert \\
\leq& \frac{2 }{\lambda_{\min}(E^*)} \Big( 2\Vert k^*\Vert \Vert E^\dag - E^*\Vert\\
&+ 2\gamma \Vert k^*\Vert \Vert B\Vert^2 \Vert P^\dag - P^*\Vert+2\gamma \Vert B\Vert  \Vert h^\dag -h^*\Vert \Big).
\end{aligned}
$$
% $$
% k^* = -\gamma (E+\gamma B'P^*B)^{-1}B'h^*\ \textrm{and }k^\dag = -\gamma (E+\gamma B'P^\dag B)^{-1}B'h^\dag
% $$
\end{proof}

Combining the results from Proposition \ref{Prop:RiccatiPerturbation}, Theorem \ref{Theo:BoundChangeK}, Proposition \ref{Prop:PerturbationLinearh}, and Theorem \ref{Theo:BoundOnk}, we obtain 
\begin{equation}\label{Eq:PolicyBoundedInAll}
\begin{aligned}
\Vert K^\dag -K^* \Vert &\leq \Gamma_3\Gamma_1 \Vert D^\dag - D^* \Vert + (\Gamma_3\Gamma_2 +\Gamma_4)\Vert E^\dag -E^*\Vert,\\
\Vert k^\dag -k^*\Vert &\leq \Gamma_1\Gamma_8 \Vert D^\dag - D^*\Vert +  (\Gamma_2\Gamma_8 +\Gamma_7)\Vert E^\dag -E^*\Vert \\&\ \ \ +\Gamma_5 \Gamma_9 \Vert d^\dag -d^*\Vert + \Gamma_6\Gamma_9\Vert K^\dag -K^*\Vert\\
&\leq (\Gamma_1\Gamma_8 + \Gamma_1\Gamma_3\Gamma_6\Gamma_9) \Vert D^\dag -D^*\Vert\\
&\ \ \ \ +\left[\Gamma_7 +\Gamma_2\Gamma_8 + ( \Gamma_4+ \Gamma_2 \Gamma_3 )\Gamma_6\Gamma_9\right] \Vert E^\dag -E^*\Vert\\
&\ \ \ \ +\Gamma_5\Gamma_9 \Vert d^\dag -d^*\Vert,
\end{aligned}
\end{equation}
for small $\Vert D^\dag - D^*\Vert$, $\Vert E^\dag - E^*\Vert$, and $\Vert d^\dag -d^*\Vert$ and for every $(D^*,E^*,d^*)\in\mathbb{S}_{+}^n\times \mathbb{S}_{++}^m\times \mathbb{R}^n$. The bound above indicates that the mapping $\mathrm{DLQG}$ that solves the discrete-time LQG problem is locally Lipschitz over the set $\mathbb{S}_{++}^n\times \mathbb{S}_++^m\times \mathbb{R}^n$. Hence, the continuity of the mapping $\mathrm{DLQG}$ holds and a tiny falsification on cost parameters by the attacker will only cause a bounded change in the computed policy. If the attacker aims to mislead the agent choosing into the `nefarious' policy $(K^\dag,k^\dag)$ from the original optimal policy $(K^*,k^*)$, (\ref{Eq:PolicyBoundedInAll}) gives an upper bound on how much falsification the attacker needs on the cost parameters.

\subsection{Attacks on Cost Function Parameters}

Suppose the attacker wants to deceive the agent into learning the policy characterized by $K^\dag$ and $k^\dag$ by altering the cost parameters. Results in Section \ref{Subsec:FunLim} give a rough bound on the region around the original cost parameters $(D,E,d)$ in which there might exist some $(D^\dag,E^\dag,d^\dag)$ that leads to the `nefarious' policy $(K^\dag, k^\dag)$. To minimize the cost of attacking, the attacker attempts to deviate the cost parameters from the original ones as small as possible:
\begin{equation}\label{Eq:CostParametersOptimization}
\begin{aligned}
\min_{\tilde{D},\tilde{d},\tilde{E},P,h}\ \ \ &\Vert \tilde{D} - D \Vert_{F} + \Vert \tilde{d} -d \Vert_{2} + \Vert \tilde{E}-E\Vert_{F},\\
s.t.\ \ \ &P = \tilde{D} + \gamma A'PA - {K^\dag}' (\tilde{E} + \gamma B'PB) K^\dag,\\
&(\tilde{E}+\gamma B'PB) K^\dag =-\gamma B'PA,\\
& h = \tilde{d} + \gamma (A+BK^\dag)'h,\\
&  2(\tilde{E}+\gamma B'PB) k^\dag = -\gamma B'h,\\
&P \succeq 0,\ \ \tilde{D}\succ 0,\ \tilde{E}\succ 0.
\end{aligned}
\end{equation}

\begin{proposition}
The optimization defined in (\ref{Eq:CostParametersOptimization}) is convex. 
\end{proposition}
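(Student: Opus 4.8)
The plan is to show that \eqref{Eq:CostParametersOptimization} minimizes a convex function over a convex feasible set, and that is all that ``convex optimization problem'' requires. First I would dispatch the objective: $\Vert \tilde D - D\Vert_F$, $\Vert \tilde d - d\Vert_2$, and $\Vert \tilde E - E\Vert_F$ are norms composed with affine maps of the decision variables $(\tilde D,\tilde d,\tilde E,P,h)$, hence convex; a sum of convex functions is convex. (If one wants an explicit conic form, introduce epigraph variables and note the objective becomes a sum of three norm-cone constraints, but this is not needed for the claim.)

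The substantive step is to verify that the four equality constraints are \emph{affine} in $(\tilde D,\tilde d,\tilde E,P,h)$. The apparent nonlinearity is illusory: $K^\dag$ and $k^\dag$ are fixed data prescribed by the attacker, not optimization variables, and for any fixed matrices $M,N$ the map $X\mapsto M'XN$ is linear. Thus in the first constraint the terms $\gamma A'PA$, $\gamma {K^\dag}'B'PBK^\dag$, $\gamma B'PBK^\dag$, ${K^\dag}'\tilde E K^\dag$ are each linear in the indicated variable, so the constraint collects to ``(linear in $P,\tilde D,\tilde E$) $=0$''; likewise $(\tilde E+\gamma B'PB)K^\dag+\gamma B'PA=0$ is linear in $(\tilde E,P)$, $h-\tilde d-\gamma(A+BK^\dag)'h=0$ is linear in $(h,\tilde d)$, and $2(\tilde E+\gamma B'PB)k^\dag+\gamma B'h=0$ is linear in $(\tilde E,P,h)$. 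Hence their common solution set is an affine subspace. Finally, $P\succeq 0$, $\tilde D\succ 0$, $\tilde E\succ 0$ cut out convex subsets of the respective spaces of symmetric matrices (the PSD cone and its interior). The feasible set is the intersection of these convex sets with the affine subspace, hence convex, and minimizing a convex objective over it is a convex program.

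There is no genuine obstacle here; the only point that warrants care is the bookkeeping confirming that fixing $(K^\dag,k^\dag)$ really eliminates \emph{every} product of two decision variables — in particular that no surviving term is bilinear in, say, $P$ and $\tilde E$. This is precisely what separates \eqref{Eq:CostParametersOptimization} from the original (nonconvex) Riccati/Bellman feasibility problem, in which $K$ would itself be a variable and the term ${K}'(\tilde E+\gamma B'PB)K$ would be genuinely nonlinear. I would therefore state and briefly justify this term-by-term linearity as the core of the argument, and conclude convexity immediately.
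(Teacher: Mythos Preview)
Your proposal is correct and rests on the same key observation as the paper: once $(K^\dag,k^\dag)$ are fixed data, every equality constraint in \eqref{Eq:CostParametersOptimization} is affine in the decision variables $(\tilde D,\tilde d,\tilde E,P,h)$, so the feasible set is an affine subspace intersected with PSD/PD cones. The paper establishes this by taking two feasible points and directly verifying that their convex combination satisfies each constraint, whereas you argue the affinity term-by-term; these are two presentations of the same argument, with yours being slightly more structural and the paper's more computational.
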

\begin{proof}
It easy to see that the objective function of (\ref{Eq:CostParametersOptimization}) is convex. Suppose $(D_1,d_1,E_1,P_1,h_1)$ and $(D_2,d_2,E_2,P_2,h_2)$ satisfy the constraints in (\ref{Eq:CostParametersOptimization}). We need to show that any $0\leq \beta \leq 1$, $(\tilde{D},\tilde{d},\tilde{E},\tilde{P},\tilde{h})$ also satisfy the constraints in (\ref{Eq:CostParametersOptimization}), where $\tilde{D}= \beta D_1 + (1-\beta)D_2$, $\tilde{d}= \beta d_1 + (1-\beta)d_2$, $\tilde{E}= \beta E_1 + (1-\beta)E_2$, $\tilde{P}= \beta P_1 + (1-\beta)P_2$, and $\tilde{h}= \beta h_1$. We know that 
$$
\begin{aligned}
P_1 &= D_1 + \gamma A'P_1A - {K^\dag}' (E_1 + \gamma B'P_1B) K^\dag,\\
P_2 &= D_2 + \gamma A'P_2A - {K^\dag}' (E_2 + \gamma B'P_2B) K^\dag.\\
\end{aligned}
$$
Multiplying both sides of the first equality by $\beta$ and both sides of the second equality by $(1-\beta)$ yields
$$
\begin{aligned}
&\beta P_1 + (1-\beta) P_2 \\
=& \beta D_1 + (1-\beta)D_2 + \gamma A' [\beta P_1 + (1-\beta) P_2 ]A \\
&-{K^\dag}'[\beta E_1 + (1-\beta) E_2 + \gamma B'[\beta P_1 + (1-\beta) P_2]B ]K^\dag.
\end{aligned}
$$
This equality shows that
$$
\tilde{P} = \tilde{D} + \gamma A'\tilde{P}A - {K^\dag}' (\tilde{E} + \gamma B'\tilde{P} B) K^\dag.
$$
Similarly, we can show the rest of the constraints also form a convex set.
\end{proof}

Since the objective function of (\ref{Eq:CostParametersOptimization}) is strictly convex, there always exist a unique solution to the optimization problem (\ref{Eq:CostParametersOptimization}) if the constraints in (\ref{Eq:CostParametersOptimization}) are feasible for a given $(K^\dag,k^\dag)$.  The feasibility of problem (\ref{Eq:CostParametersOptimization}) is not always guaranteed. An interesting question is given $(A,B)$, under what conditions on $(K^\dag, k^\dag)$ such that the constraints in $(\ref{Eq:CostParametersOptimization})$ are feasible. This question is equivalent to asking given $(K^\dag,k^\dag)$ whether there exists $(\tilde{D},\tilde{E},\tilde{d})$ such that $(K^\dag,k^\dag) = \mathrm{DLQG}(\tilde{D},\tilde{E},\tilde{d})$. If $K^\dag$ is not stabilizing, $K^\dag$ cannot be optimal for any $D\succ 0$, $E\succ 0$, and $d^\dag$. Note that the goal of the attacker is not to unstablize the system but to mislead the agent into a nefarious policy $(K^\dag,k^\dag)$ with $K^\dag$ being stabilizing. But even if $K^\dag$ is stabilizing, there might not exist $D \succeq 0$ and $E\succ 0$ such that $K^\dag$ is optimal. Hence, our discussion will focus on under what conditions, a stabilizing $(K^\dag,k^\dag)$ is optimal for some $D\succeq 0$, $E\succ 0$, and $d$.

The same question was firstly raised by Kalman in the field of inverse optimal control \cite{kalman1964when}. He has showed
that for the single-input case, the circle criterion (see Theorem 6 of \cite{kalman1964when}) is a necessary and sufficient condition for a control policy to be optimal in a continuous-time system. Fujii and Narazaki have given a complete solution for the continuous-time multi-input case under the assumption that the cost parameters on the control input $E$ is fixed \cite{fujii1984complete}. Sugimoto and Yamamoto have showed the sufficient and necessary conditions for $K^\dag$ to be optimal for some $D$ and $E$ under the stage cost $x_t'Dx_t + u_t' Eu_t$ for a discrete-time system \cite{sugimoto1988solution}. We consider a more general cost function than \cite{sugimoto1988solution} did. Hence, we extend the results in \cite{sugimoto1988solution} to give the conditions on $(K^\dag, k^\dag)$ under which the constraints in (\ref{Eq:CostParametersOptimization}) are feasible.

In general, conditions of optimality can be expressed most conveniently using frequency-domain formulas \cite{kalman1964when}. For system $(A,B)$, consider the following right coprime factorization by two polynomial matrices $M(z)$ and $N(z)$:
$$
\sqrt{\gamma}(zI-A)^{-1}BE^{-1/2} = M(z)N(z)^{-1}.
$$
The feedback system $\sqrt{\gamma}(A+BK)$ induces a right coprime factorization
$$
\sqrt{\gamma}(zI-A_c)^{-1}BE^{-1/2}  = M(z)N_c(z)^{-1},
$$
where $N_c(z) = N(z)-KM(z)$. Define
$$
W(z)\coloneqq I_m + K(zI_n-A)^
{-1}BE^{-1/2}  = N_c(z)N(z)^{-1}.
$$
Define
$$
\Psi(z) \coloneqq N_c'(z^{-1})W(0)N_c(z) - N'(z^{-1})N(z).
$$
The following theorem states the conditions under which $(K,k)$ is optimal for some $D \succeq 0$, $E \succ 0$, and $d$.
\begin{theorem}\label{Theo: AttackFeasibilityCondition}
Let $(A,B)$ be controllable and suppose that $A$ is invertible and $B$ has rank $m$. Let the stabilizing control policy $u=Kx +k$ be given. Transform $\Psi(z)$ into the form 
$$
\Psi(z) U(z) =
\begin{bmatrix}
\tilde{\Psi}(z)\ & 0
\end{bmatrix}
$$
by some unimodular matrix $U(z)$ and, where $\tilde{\Psi}(z)$ is a rational function matrix with full rank. Then, the constraints in (\ref{Eq:CostParametersOptimization}) is feasible under the given control policy $(K,k)$ if and only if the conditions 
\begin{enumerate}
    \item $W(0)$ is positive definite,
    \item $\Psi(z)$ is positive semi-definite for all $z\in \mathbb{C}$ such that $|z|=1$,
    \item and there exists no pair of $\lambda\in \mathbb{U}^+$ and $v=(v_1,\cdots,v_m)'$ such that $N(\lambda)U(\lambda)v=0$ and $v_1=\cdots=v_m=0$,
    where $\mathbb{U}^+\coloneqq \{z\in\mathbb{C}:|z|\geq 1\}$,
\end{enumerate}
hold for some $E\succ 0$.
\end{theorem}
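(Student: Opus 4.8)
The plan is to reduce the feasibility question to a discrete-time \emph{inverse LQR} problem (does a stabilizing $K^\dag$ arise as an optimal gain for \emph{some} $D\succeq 0,\ E\succ 0$?), and then resolve that problem by spectral factorization in the frequency domain, extending the argument of Sugimoto--Yamamoto \cite{sugimoto1988solution} to our discounted, affine-cost setting.

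First I would discard the parameter $d$ and the two $k^\dag$-constraints. Given any $(\tilde D,\tilde E,P)$ meeting the first two (Riccati-type) constraints with $P\succeq 0$ and $\tilde E\succ 0$, the matrix $\tilde E+\gamma B'PB$ is positive definite, so the equation $2(\tilde E+\gamma B'PB)k^\dag=-\gamma B'h$ is solvable for $h$ because $B$ has rank $m$ (hence $B'$ has full row rank); then $\tilde d:=(I-\gamma {A_c^\dag}')h$ is well defined, since $A_c^\dag=A+BK^\dag$ is Schur and $0\le\gamma\le 1$ make $I-\gamma {A_c^\dag}'$ invertible (Lemma~\ref{Lemma:InvertibleMatrix}), and it closes the remaining two constraints. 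Thus the $d$-part imposes nothing extra, and the feasibility of (\ref{Eq:CostParametersOptimization}) is equivalent to: \emph{there exist $E\succ 0$ and $D\succeq 0$ such that $P\succeq 0$ solves the Riccati equation (\ref{Eq:RiccatiEquation}) for $(D,E)$ and $K^\dag=-\gamma(E+\gamma B'PB)^{-1}B'PA$.}

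Next I would prove, for a \emph{fixed} $E\succ 0$, the per-$E$ equivalence ``$\exists D\succeq 0$ making $K^\dag$ optimal $\iff$ conditions (1)--(3) hold for this $E$''; the theorem then follows by quantifying $E$ over $\mathbb{S}^m_{++}$ and combining with the reduction above. Absorbing the discount via $A\mapsto\sqrt\gamma A$ brings this to the undiscounted case. For necessity I would run the standard discrete-time Riccati/KYP manipulation to obtain the return-difference identity: in terms of the coprime factorizations $\sqrt\gamma(zI-A)^{-1}BE^{-1/2}=M(z)N(z)^{-1}$ and $N_c(z)=N(z)-KM(z)$,
$$ N_c'(z^{-1})\,W(0)\,N_c(z)-N'(z^{-1})N(z)=M'(z^{-1})\,\widehat D\,M(z), $$
where $\widehat D$ is a congruence image of $D$. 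Hence $\Psi(z)=M'(z^{-1})\widehat D M(z)$, which on $|z|=1$ equals $M(z)^{*}\widehat D M(z)\succeq 0$, giving (2); the boundary positivity $W(0)\succ 0$ is exactly what the identity forces in the optimal case (it certifies $\widehat D$-consistency and that $E+\gamma B'PB\succ 0$ in normalized coordinates), giving (1); and the observability of $(A,D^{1/2})$ required by Assumption~\ref{Ass:ConllabilityObservability}, transcribed through the column compression $\Psi(z)U(z)=[\tilde\Psi(z)\ 0]$, gives the minimality condition (3). For sufficiency I would start from (2), apply the discrete-time matrix spectral-factorization theorem to the full-rank block $\tilde\Psi(z)$ to write $\Psi(z)=\Lambda'(z^{-1})\Lambda(z)$, use (3) to select the factor so that the induced pair $(A,D^{1/2})$ is observable, use (1) to normalize consistently with a positive definite $E+\gamma B'PB$, read off a real symmetric $D\succeq 0$ from $\Lambda$ via $M(z)$, and finally run the return-difference identity backwards; uniqueness of the stabilizing PSD solution of (\ref{Eq:RiccatiEquation}) (Lemma~\ref{Lemma:PolicyUniquelyDecided}) then shows the optimal gain for $(D,E)$ is exactly $K^\dag$.

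I expect the spectral-factorization step in the sufficiency direction to be the main obstacle: one must extract from $\Psi(z)$ a factor that is simultaneously (i) \emph{polynomial} and compatible with the \emph{fixed} $M(z)$, so that the resulting $D$ is real and symmetric; (ii) ``large enough'' to yield $D\succeq 0$; and (iii) minimal, so that $(A,D^{1/2})$ is observable and $P$ (hence $K^\dag$) comes out uniquely. It is precisely condition (3) --- the non-vanishing of $N(\lambda)U(\lambda)$ on the complementary (``null-block'') directions for $|\lambda|\ge 1$, with $\lambda\in\mathbb{U}^+$ --- that is tailored to make such a choice possible, exactly as in \cite{sugimoto1988solution}. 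Carrying over their bookkeeping to the discounted cost, to the affine term $d'x$, and to our standing assumptions that $A$ is invertible and $B$ has rank $m$ is the delicate part of the argument.
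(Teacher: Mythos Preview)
Your proposal is correct and follows essentially the same route as the paper. Both arguments (i) decouple the affine part by observing that, once $(\tilde D,\tilde E,P)$ satisfy the first two constraints, the full column rank of $B$ and the invertibility of $I-\gamma {A_c^\dag}'$ let one back-solve for $h$ and then $\tilde d$, so the $k^\dag$-constraints impose nothing, and (ii) reduce the $K^\dag$-part to the discrete-time inverse LQR characterization of Sugimoto--Yamamoto. The only difference is granularity: the paper simply invokes Theorem~5.1 of \cite{sugimoto1988solution} as a black box for the equivalence between ``$K^\dag$ optimal for some $D\succeq 0,\ E\succ 0$'' and conditions (1)--(3), whereas you sketch the spectral-factorization proof of that theorem (return-difference identity for necessity, factorization of $\tilde\Psi$ plus condition (3) for sufficiency). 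Your extra detail is sound and is exactly what underlies the cited result, so nothing is lost or gained beyond self-containedness.
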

\begin{proof}
Note that whether $K$ is optimal depends only on $D$, $E$, $A$, and $B$. From (\ref{Eq:ControllerParameters}) and (\ref{Eq:RiccatiEquation}), we know that $K$ is optimal to $D$, $E$, $A$, and $B$ if 
$$
K= -\gamma (E+\gamma B'P^* B)^{-1}B'P^*A,
$$
where $P^*$ solves the Riccati equation $(\ref{Eq:RiccatiEquation})$. By Theorem 5.1 of \cite{sugimoto1988solution}, we know that given $(A,B)$ controllable and $A$ invertible, for a given stabilizing $K$, there always exist $D,P\succeq 0$ and $E \succ 0$ such that 
$$
\begin{aligned}
&P = D + \gamma A'PA - \gamma^2 A'P B(E + \gamma B'PB)^{-1}B'PA,\\
&(E+\gamma B'PB) K =-\gamma B'PA
\end{aligned}
$$
if and only if conditions 1), 2), and 3) are satisfied. For $k$, we have constraints
$$
\begin{aligned}
&k = -\gamma (E+\gamma B'PB)^{-1}B' h,\\
&h = (I-\gamma A_c)^{-1}d.
\end{aligned}
$$
Note that $B$ has full rank (rank $m$) and $E+\gamma B'PB$ and $I-\gamma A_c$ is invertible. Hence, for any $k$, as long as conditions 1), 2), 3) are satisfied for a given $K$, we can always find $d$ such that $k$ is optimal. We, hence, completes the proof.
\end{proof}

\begin{remark}
The assumption that $B$ has full column rank is reasonable. The assumption indicates that there is no redundant control inputs. As for the assumption that $A$ is invertible, consider a discrete-time system sampled from a continuous-time system $\dot{x} = \hat{A}x + \hat{B}u$ with a small sample period $\Delta t$. For the discretized linear system, we have $A= e^{\hat{A} \Delta t}$, $B= \int_{0}^{\Delta t} e^{\hat{A} \tau} \hat{B}d\tau$. Apparently, $A=e^{\hat{A} \Delta t}$ is invertible. 
\end{remark}

\begin{remark}
The conditions in Theorem \ref{Theo: AttackFeasibilityCondition} provide a quick way to check the feasibility of the optimization problem (\ref{Eq:CostParametersOptimization}) before solving it. If the optimization problem (\ref{Eq:CostParametersOptimization}) is not feasible, then the attacker cannot trick the agent into learning the `nefarious' policy $(K^\dag, k^\dag)$ no matter how the attacker falsifies the cost parameters. The feasible set of (\ref{Eq:CostParametersOptimization}) cannot be singleton. Once there exists $D\succeq 0$, $E\succ 0$, and $d$ such that the control policy $(K^\dag,k^\dag)$ is optimal, the same control policy is optimal for $\alpha D$, $\alpha E$, and $\alpha d$ for any $\alpha >0$.
\end{remark}

\begin{remark}
The conditions in Theorem \ref{Theo: AttackFeasibilityCondition} are stated in the frequency domain. Iracleous and Alexandridis have developed a set of necessary and sufficient conditions based on the state space representation in the time-domain, under which the control policy $u=Kx$ is optimal for some $D\succeq 0$, $E\succ 0$ under the linear system $(A,B)$ \cite{iracleous1998new} with stage cost $x_t'Dx_t + u_t' Eu_t$. One can extend their results to check the feasibility of (\ref{Eq:CostParametersOptimization}) using conditions in the time-domain.
\end{remark}

% \begin{lemma}
% Given a policy $K^\dag$ and $k^\dag$, the $D$ and $d$ that is used to generate the policy is unique.
% \end{lemma}

% \begin{proof}
% Suppose that there exist $D_1,d_1$ and $D_2,d_2$ such that $D_1\neq D_2$ or $d_2 \neq d_2$ and both generate the same policy ${K}^\dag$ and ${k}^\dag$. Let $P_1,h_1$ be the solution of (\ref{Eq:RiccatiEquation}) under $D_1,d_1$ and $P_2,h_2$ be the solution of (\ref{Eq:RiccatiEquation}) under $D_2,d_2$.
% \end{proof}

\section{Deceiving Batch RL Learner}\label{Sec:BarchRLLearner}

\subsection{The Batch RL Learner}

Consider a model-based LQG learner who learns a model from a training dataset batch first and then plans over the estimated model. The LQG learner implements system identification from a dataset and then computes the optimal controller based on the identified model \cite{mania2019certainty}. Consider the training dataset $\mathcal{D}=\{(x_t,u_t,c_t,x_{t+1}): t=0,1,\cdots,T-1\}$ generated from past experience of interacting with the LQR system. The learner implements system identification by solving the following least-square problems:
\begin{equation}\label{Eq:IdentifySystemPara}
(\hat{A},\hat{B}) \in \argmin_{A,B} \sum_{t=0}^{T-1} \frac{1}{2} \left\Vert  Ax_t + B u_t - x_{t+1} \right\Vert_2^2,
\end{equation}
\begin{equation}\label{Eq:CostParametersEstimation}
\begin{aligned}
(\hat{D},\hat{E},\hat{d},\hat{r})\in \argmin_{D\succeq 0, E\succ \epsilon I,d,r} \sum_{t=0}^{T-1}\Bigg\Vert&  x_t'Dx_t + d'x_t + r\\
&\ \ \ + u_t'Eu_t  -c_t  \Bigg\Vert.
\end{aligned}
\end{equation}
Define 
$$
X_T\coloneqq \begin{bmatrix}
x_1'\\
x_2'\\
\vdots\\
x_{T}'
\end{bmatrix},\ Z_T\coloneqq\begin{bmatrix}
z_0'\\
z_1'\\
\vdots\\
z_{T-1}'
\end{bmatrix},
$$
where $z_t\coloneqq \begin{bmatrix}
x_t' & u_t'
\end{bmatrix}'$. The least-square estimator for $(A,B)$ is (assuming the invertibility of $Z'Z$)
$$
\begin{bmatrix}
\hat{A} & \hat{B}
\end{bmatrix}'= (Z_T'Z_T)^{-1}Z_TX_T .
$$
Note that the optimization problem (\ref{Eq:CostParametersEstimation}) is convex. We can write $x_t'Dx_t$ as $\overline{x_t}' \Theta(D)$ and $u_t' Eu_t$ as $\overline{u_t} '\Theta(E)$. Define
$$
H \coloneqq \begin{bmatrix}
\overline{x_0}' & \overline{u_0}' & x_0' & 1\\
\overline{x_1}' & \overline{u_1}' & x_1' & 1\\
\vdots & \vdots & \vdots &\vdots  \\
\overline{x_{T-1}}' & \overline{u_{T-1}}' & x_{T-1}' & 1\\
\end{bmatrix}.
$$
The optimization problem $(\ref{Eq:CostParametersEstimation})$ admits a unique solution if the objective function is strictly convex, i.e., $H'H$ is positive definite, or equivalently $H$ is column independent. With the model identified, the learner computes the control policy based on (\ref{Eq:ControllerParameters}), (\ref{Eq:RiccatiEquation}), and (\ref{Eq:LinearEquation}) using the estimated system parameters $(\hat{A}, \hat{B})$ and $(\hat{D},\hat{E},\hat{d},\hat{r})$.

\subsection{The Batch RL Attacker}

We define the batch RL attack model by specifying the attacker's capability, objective, and the information he has:  
\begin{enumerate}
    \item The attacker is able to falsify the costs in the training dataset $\mathcal{D}$. The falsified training dataset is denoted by $\mathcal{D}^\dag =\{(x_t,u_t,c_t^\dag,x_{t+1}),t=0,1,\cdots,T-1\}$. Suppose the falsification of the cost data is consistent and compatible with the state and the control data, i.e., $c_t^\dag =  x_t'D^\dag x_t + {d^\dag}'x_t + r + u_t'E^\dag u_t$.
    \item The objective of the attacker is to trick the learner into learning the control policy $(K^\dag,k^\dag)$ by feeding the learner the falsified training dataset $\mathcal{D}^\dag$.
    \item  The attacker only has the knowledge of the original training dataset $\mathcal{D}$.
\end{enumerate}
Define $\mathbf{c} =[c_0,c_1,\cdots,c_{T-1}]'\in\mathbb{R}^T$ as a vector of cost signals from the original dataset $\mathcal{D}$. Let $\mathbf{c}^\dag =[c^\dag_0,c^\dag_1,\cdots,c^\dag_{T-1}]'\in\mathbb{R}^T$ be a vector of falsified cost signals in the poisoned dataset $\mathcal{D}^\dag$. To achieve his objective, the attacker needs to solve
\begin{align}
\min_{\tilde{D},\tilde{E},\tilde{d},\tilde{r},P,h} &\sum_{t=0}^{T-1} \Vert \mathbf{c}^\dag - \mathbf{c}\Vert_2 \label{Eq:ObjectiveBatchRL}\\ \label{s}
s.t.\ \ \ &P = \tilde{D} + \gamma \hat{A}'P\hat{A} - {K^\dag}' (\tilde{E} + \gamma \hat{B}'P\hat{B}) K^\dag,\\
&(\tilde{E}+\gamma \hat{B}'P\hat{B}) K^\dag =-\gamma \hat{B}'P\hat{A},\\
& h = \tilde{d} + \gamma (\hat{A}+\hat{B}{K}^\dag)'h,\\
& 2(\tilde{E}+\gamma \hat{B}'P\hat{B}) k^\dag = -\gamma \hat{B}'h,\\
&P \geq 0,\ \ \tilde{D}\geq 0,\ \tilde{E}> 0. \label{Eq:BatchRLPSD}\\
&c_t^\dag = x_t'\tilde{D} x_t + {\tilde{d}}'x_t + r + u_t'\tilde{E} u_t, \forall t
\end{align}

The attacker's problem defined in (\ref{Eq:ObjectiveBatchRL})-(\ref{Eq:BatchRLPSD}) is convex, whose feasibility is aligned with the feasibility of problem \ref{Eq:CostParametersOptimization} with $(A,B)$ replaced by $(\hat{A},\hat{B})$. The solution of the problem can then be used to falsify the original dataset $\mathcal{D}=\{(x_t,u_t,c_t,x_{t+1}): t=0,1,\cdots,T-1\}$ into the `bad' dataset $\mathcal{D}^\dag=\{(x_t,u_t,c^\dag_t,x_{t+1}): t=0,1,\cdots,T-1\}$. In Section \ref{Sec:NumericalStudy}, we will demonstrate the effectiveness of the batch RL attack model and the vulnerabilities of batch RL-enabled LQG systems.

\begin{remark}
The attacker uses $\hat{A}$ and $\hat{B}$ because he only knows the original dataset $\mathcal{D}$. Even if he knows $A$ and $B$, it is better to use $(\hat{A},\hat{B})$ to mimic the learner's problem.
Since $c_t^\dag =  x_t'\tilde{D} x_t + {\tilde{d}}'x_t + \tilde{r} +  u_t'\tilde{E} u_t$
$\mathbf{c}^\dag$ is completely decided by $\Tilde{D}$, $\tilde{E}$, $\tilde{d}$, $\tilde{r}$ and $\mathcal{D}$. Indeed, the attacker only needs to optimize over $\tilde{D}$, $\tilde{E}$, $\tilde{d}$, and $\tilde{r}$ even though we include $\mathbf{c}^\dag$ in the objective function for simplicity. \cite{ma2019policy} also considered data poisoning attacks on cost data. However, our formulation of the optimization problem differs from theirs in two aspects: 1. We require the cost data to be consistent and compatible with the state and the control to make the falsification less likely to be detected. 2. Our problem is convex. Hence, we can solve the problem to obtain the optimal solution instead of solving a surrogate problem which produces a sub-optimal value.
\end{remark}

\section{Deceiving ADP Learner}\label{Sec:DecevADPLearner}

\subsection{The ADP Learner}

The policy iteration algorithms presented in Section \ref{Subsec:IntroPolicyIter} will converge for the LQG problem \cite{kleinman1968iterative,bertsekas1987dynamic}. However, the  policy iteration algorithms required exact knowledge of the system model (\ref{Eq:SystemDynamic}) and the stage cost function (\ref{Eq:ImmediateCostFun}). Bradtke et al. proposed an ADP algorithm that allows the agent to perform an approximate version of the policy iteration algorithm using merely the observed data points $(x_t,u_t,c_t,x_{t+1}),t=0,1,2,\cdots$ \cite{bradtke1994adaptive}.

The ADP learner leverages Recursive Least Square (RLS) to directly estimate the function $Q_{K,k}$ in (\ref{Eq:QFunQuadForm}). To see how the adaptive learner learns the optimal policy, we rearrange (\ref{Eq:DefineQFun}) and (\ref{Eq:QFunQuadForm}) to obtain
$$
\begin{aligned}
c(x_t,u_t) &= Q_{K,k}(x_t,u_t) - \gamma Q_{K,k}(x_{t+1},u_{t+1})\\
&=\overline{[x_t,u_t,1]} \Theta(H_{K,k}) - \gamma \overline{[x_{t+1},u_{t+1},1]} \Theta(H_{K,k}),\\
&= \phi_t'\theta_{K,k},
\end{aligned}
$$
where $\phi_t=\overline{[x_t,u_t,1]} -\gamma \overline{[x_{t+1}, Kx_{t+1}+k,1]} \in \mathbb{R}^{(n+m+1)(n+m+2)/2}$, and $\theta_{K,k} = \Theta(H_{K,k}) \in\mathbb{R}^{(n+m+1)(n+m+2)2}$, where $[x,u,1]$ is the column vector concatenation of $x$, $u$, and scalar $1$. Through the lens of modern RL, $\overline{[x,u,1]}$ can be viewed as a feature vector of the original state and action space and  $\theta_{K,k}$ can be viewed as a vector of weights that the agent needs to tune to approximate the true $Q$-function.

RLS can now be used to estimate $\theta_{K,k}$. The recurrence relations for RLS are given by
\begin{equation}\label{Eq:RLS}
\begin{aligned}
\hat{\theta}_z(i) &= \hat{\theta}_z(i-1) + \frac{S_z(i-1)\phi_t \left( c_t - \phi_t'\hat{\theta}_z(i-1) \right)}{1+\phi_t'S_z},\\
S_z(i) &= S_z(i-1) - \frac{S_z(i-1) \phi_t \phi_t' S_z(i-1)}{1+\phi_t'S_z(i-1)\phi_t},\ \ S_z(0) = S_0.
\end{aligned}
\end{equation}
Here, $P_0= \beta I$ for some large positive constant $\beta$. $\theta_z = \Theta(H_{K_z,k_z})$ is the true parameter vector for the function $Q_{K_z,k_z}$. $\hat{\theta}_z(i)$ is the $i$th estimate of $\theta_z$. The subscript $t$ and the index $i$ are both incremented at each time step. And $z$ is the index that counts the number of policy updates the algorithm made. 

The LQG agent follow the $Q$-function based ADP algorithms (i.e., Algorithm \ref{Algo:ADPPolicyIter}) to find the optimal policy. In line 7, the agent adds an appropriate probing noise $e_t$ to make sure $\phi_t$ is persistently excited over time \cite{bradtke1994adaptive}. In line 9, when computing $\phi_t$, we let $u_{t+1} = K_{z} x_t + k_z$. In line 13 and 14, $\hat{H}_{(uu)}$, $\hat{H}_{(ux)}$, and $\hat{H}_{(u1)}$ are sub-matrices of the matrix $\hat{H}$ following the same notation for the matrix $H_{K,k}$ in (\ref{Eq:QFunQuadForm}). The convergence of Algorithm \ref{Algo:ADPPolicyIter} is guaranteed if $(A,B)$ is controllable (or at least stabilizable), $K_0$ is stabilizing, and $\phi_t$ is persistently excited. One can refer to \cite{bradtke1994adaptive} for more details about the $Q$-function based ADP algorithm.

\begin{algorithm}[H]
\caption{The $Q$-Function Based ADP Algorithm}{\label{Algo:ADPPolicyIter}}
\begin{algorithmic}[1]
\State \textbf{Initialize:} Stabilizing policy $(K_0,k_0)$, tolerance $\epsilon_1>0$, $\epsilon_{2}>0$
\State Set $\hat{\theta}_0(0)=\mathbf{0}$, $z=0$, and $t=0$
\Repeat 
\State Set $S_z(0)=S_0\coloneqq \beta I$ and $i=0$
\Repeat
\State Measure $x_t$
\State Compute $u_t = K_zx_t + k_z + e_t$ and apply $u_t$
\State \textcolor{red}{Receive $c_t$} and measure $x_{t+1}$
\State Compute $\hat{\theta}_z(i+1)$ using RLS (\ref{Eq:RLS})
\State Set $t=t+1$ and $i=i+1$
\Until{$\Vert \hat{\theta}_z(i) -\hat{\theta}_z(i-1) \Vert_2 < \epsilon_2 $  }
\State Find the matrix $\hat{H}$ corresponding to  $\hat{\theta}_z(i)$
\State Compute $K_{z+1} = -\hat{H}_{(uu)}^{-1}\hat{H}_{(ux)}$ 
\State Compute $k_{z+1} = -\hat{H}_{(uu)}^{-1}\hat{H}_{(u1)}$
\State Set $\hat{\theta}_{z+1}(0)=\hat{\theta}_z(i)$ and z = z+1
\Until{$\Vert K_z -K_{z-1} \Vert_F + \Vert k_z -k_{z-1} \Vert_F < \epsilon_2$}
\end{algorithmic}
\end{algorithm}

\subsection{The ADP Attacker}

We define the ADP attack model by specifying the attacker's capability, objective, and the information he has: 
\begin{enumerate}
    \item The attacker can falsify the cost signals $c_t$ and $c_t^\dag$ is received by the agent in line 8 of Algorithm \ref{Algo:ADPPolicyIter}.
    \item The attacker's objective is to trick the agent into learning the `nefarious' policy $(K^\dag,k^\dag)$.
    \item The attacker knows the LQG system and receives the same information as the agent during the learning process.
\end{enumerate}

The problem for the attacker is how to craft such an attack so that the agent will finally learn the `nefarious' policy and the falsified cost $c_t^\dag$ deviate insignificantly from the original cost signals $c_t$. Hence, the attacker can first solve the cost parameters falsification problem (\ref{Eq:CostParametersOptimization}). Let $(D^\dag, E^\dag,d^\dag)$ be the solution of problem (\ref{Eq:CostParametersOptimization}). The falsified cost signal $c_t^\dag$ then can be crafted using $c_t^\dag = x_t D^\dag x_t + d^\dag x_t + r + u_t'E u_t$ for $t=0,1,2,\cdots$.  The agent receives $c_t^\dag$ in Algorithm \ref{Algo:ADPPolicyIter} and other information received remain correct. Following the same arguments as in Theorem 1 of \cite{bradtke1994adaptive}, we know that if problem (\ref{Algo:ADPPolicyIter}) is feasible, the agent will eventually be tricked into learning the `nefarious' policy $(K^\dag,k^\dag)$. In Section \ref{Sec:NumericalStudy}, we will demonstrate the effectiveness of the ADP attack model and the vulnerabilities of the ADP based LQG systems.

\section{Numerical Studies}\label{Sec:NumericalStudy}

We now use an LQG system to demonstrate the bounds we obtain in Section \ref{Sec:LQGManiCostPara} as well as the effectiveness of the two attack models against the Batch RL learner in Section \ref{Sec:BarchRLLearner} and the ADP learner in Section \ref{Sec:DecevADPLearner}. Consider a linear system with $6$-dimensional state consisting of its $3$-dimensional (3D) position and $3$-dimensional (3D) velocity:
$$
A = \begin{bmatrix}
I_3 & 0.1I_3 \\
\mathbf{0} & 0.95I_3\\
\end{bmatrix}\in\mathbb{R}^{6\time 6},\ \ \ \ B = \begin{bmatrix}
\mathbf{0}\\
0.1 I_3\\
\end{bmatrix}\in\mathbb{R}^{6\times 3}.
$$
Suppose $C=I_n$ and $w_t\sim \mathcal{N}(0,0.01I_n)$. Let $x_t = [\chi_t,\eta_t,\zeta_t,v^\chi_t,v^\eta_t,v^\zeta_t]'$. Here, $(\chi_t,\eta_t,\zeta_t)$ is the position of the vehicle in the 3D space at time $t$ and $(v^\chi_t,v^\eta_t,v^\zeta_t)$ is the velocity at time $t$. The vehicle starts from the initial position $(\chi_0,\eta_0,\zeta_0) = (1,1,0.5)$ with velocity $(v^\chi_0,v^\eta_0,v^\zeta_0) = (-1,-0.5,-1)$. The LQG aims to stabilize the vehicle to the origin. The true cost parameters are $D=I_n$, $E= 0.5 I_m$, $d=0$, and $r=0$ and the cost signals the agent receives is $c_t = x_t'Dx_t + d'x_t + r + u_t'Eu_t$. The discount factor $\gamma$ is set to be $\gamma = 0.9$. The optimal policy, which stabilizes the vehicle to the origin, can be computed according to (\ref{Eq:ControllerParameters})-(\ref{Eq:LinearEquation}):
\begin{equation}\label{Eq:NumOptimalPolicy}
\scriptsize
\begin{aligned}
&K^* =\begin{bmatrix}
 -0.5316 &  -0.0000 & -0.0000 & -0.9700 & -0.0000 & -0.0000\\
 -0.0000 & -0.5316 &  -0.0000 & -0.0000 &  -0.9700 & -0.0000\\      
 -0.0000 & -0.0000 & -0.5315 &  -0.0000 &       -0.0000 & -0.9700\\
\end{bmatrix},\\
&k^* = \begin{bmatrix}
-0.0000\\
-0.0000\\
-0.0000\\
\end{bmatrix}.
\end{aligned}
\end{equation}

Suppose the attacker aims to trick the agent into learning the following policy:
\begin{equation}\label{Eq:NumNefariousPolicy}
\scriptsize
\begin{aligned}
&K^\dag = \begin{bmatrix}
-0.5316 &   0   & 0  & -0.9700 &   0 & 0\\
0   &   -0.5316  & 0  & 0. & -0.9700  & 0\\
0   &  0 & -0.5316  & 0  &  0 & -0.9700
\end{bmatrix}\\
&k^\dag = \begin{bmatrix}
0.5316\\
0\\
-0.5316\\
\end{bmatrix},
\end{aligned}
\end{equation}
which steer the vehicle to a dangerous position at $ (\chi,\eta,\zeta) = (1,0,-1)$
and stay there with zero velocity.

To see whether he can successfully mislead the agent into learning the `nefarious' policy $(K^\dag,k^\dag)$ by falsifying the cost parameters, the attacker can check the conditions in Theorem \ref{Theo: AttackFeasibilityCondition}. Indeed, the conditions 1), 2), and 3) are held when choosing $E=I_m$ meaning the cost parameter falsification problem (\ref{Eq:CostParametersOptimization}) is feasible. Solving problem $(\ref{Eq:CostParametersOptimization})$ using CVXPY \cite{diamond2016cvxpy} yields (we write numerical values that are less than $1.0\mathrm{e}{-10}$ as `0' due to space limitation.)
\begin{equation}\label{Eq:NumFalsifiedCostPrameters}
\scriptsize
\begin{aligned}
D^\dag &=\begin{bmatrix}
0.7163 & 0 & 0.2837 & -0.1218 & 0 & 0.1218\\
0 &  1.000  & 0 & 0 & 0 & 0\\
0.2837 &  0 &  0.7163 & 0.1218 & 0 & -0.1218\\
-0.1218 & 0 &  0.1218 &  0.5687 &0 &  0.4313\\
0 & 0 & 0 & 0 & 1.000 & -0\\
0.1218 & 0 & -0.1219 &  0.4313 & 0 &  0.5687
\end{bmatrix},\\
d^\dag &= \begin{bmatrix}
-0.1448 & 0 & 0.1448 & -1.6084 & 0 & 0.1608
\end{bmatrix}',
\\
E^\dag &= \begin{bmatrix}
0.2904 & 0 & 0.2096\\
0 & 0.5000 & 0\\
2.095& 0 & 0.2904\\
\end{bmatrix}.
\end{aligned}
\end{equation}
The optimal value $\Vert D^\dag - D \Vert_F + \Vert E^\dag - E \Vert_F+ \Vert d^\dag - d \Vert_F$ of the optimization problem (\ref{Eq:CostParametersOptimization}) is $1.8137$. The policy $(\hat{K}^\dag, \hat{k}^\dag)$ computed using the falsified cost parameters $D^\dag,E^\dag,d^\dag$ aligns well with the attacker's target policy $(K^\dag, k^\dag)$ stated in (\ref{Eq:NumNefariousPolicy}):
$$
\scriptsize
\begin{aligned}
&\hat{K}^\dag =\begin{bmatrix}
 -0.5315 &  -0.0000 & -0.0000 & -0.9699 & -0.0000 & -0.0000\\
 -0.0000 & -0.5315 &  -0.0000 & -0.0000 &  -0.9699 & -0.0000\\ 
 -0.0000 & -0.0000 & -0.5315 &  -0.0000 &  -0.0000 & -0.9699\\
\end{bmatrix},\\
&\hat{k}^\dag = \begin{bmatrix}
0.5316\\
0.0000\\
-0.5316\\
\end{bmatrix}.
\end{aligned}
$$

\subsection{Attacking Batch RL Learner}

The original training data set $\mathcal{D} = \{(x_t,u_t,c_t,x_{t+1}),t=0,1,\cdots,T-1\}$ is generated by running the LQG system with uniformly distributed random controls and receiving the accurate cost signals $c_t,t=0,1,\cdots,T-1$. Here, $T=400$ is the number of time steps (i.e., the number of data tuple collected).

Based on the dataset $\mathcal{D}$, the Batch RL learner estimates the system parameters and the cost parameters using (\ref{Eq:CostParametersEstimation}) and (\ref{Eq:CostParametersEstimation}). With the estimates, the computed optimal policy under the clean data is
$$
\scriptsize
\begin{aligned}
\hat{K}_{\textrm{batch}}^* &= \begin{bmatrix}
-0.5613 & -0.0148 & 0.0234 & -0.9878 & 0.0188 & 0.0063\\
0.0268 & -0.5232 & -0.0045 & 0.0265 & -0.9337 & -0.0189\\
-0.0047 & -0.0145 & -0.5465 & 0.0171 & -0.0111 & -0.9630
\end{bmatrix},\\
\hat{k}_{\mathrm{batch}}^* &=\begin{bmatrix}
-1.5601\mathrm{e}{-06} & -3.9701\mathrm{e}{-07} & 1.6453\mathrm{e}{-06}
\end{bmatrix}'.
\end{aligned}
$$
We can see that the batch RL learner learns a decent policy $(\hat{K}_{\mathrm{batch}}^*,\hat{k}_{\mathrm{batch}}^*)$ that only differs slightly from the optimal policy (\ref{Eq:NumOptimalPolicy}).

\begin{figure}[ht]
  \begin{subfigure}[b]{1\columnwidth}
    \includegraphics[width=\columnwidth]{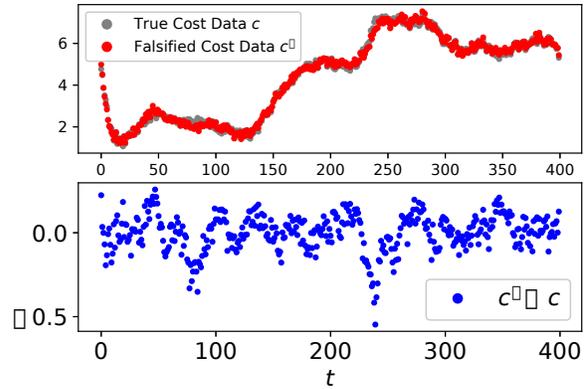}
    \subcaption{True cost data, falsified cost data, and their differences.}
    \label{fig:AttackingBatchRL-a}
  \end{subfigure}
  \vfill %%
  \begin{subfigure}[b]{1\columnwidth}
    \includegraphics[width=\columnwidth]{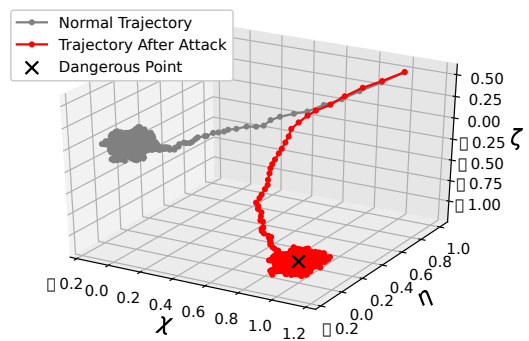}
    \subcaption{Trajectories generated from the policy learned from the clean data and the policy learned from the falsified data respectively.}
    \label{fig:AttackingBatchRL-b}
  \end{subfigure}
  \caption{Demonstration of attacks against LQG agent using batch RL.}
  \label{Fig:AttackingBatchRL}
\end{figure}

The attacker aims to trick the agent into learning the `nefarious' policy (\ref{Eq:NumNefariousPolicy}) by falsifying the cost data $c_t$ into $c_t^\dag$. First, the attacker estimates the system parameters using the clean dataset $\mathcal{D}$. Then, the attacker solves the optimization problem (\ref{Eq:ObjectiveBatchRL}) to generate the falsified cost data $c_t^\dag,t=0,1,\cdots,T-1$. The LQG agent learns from the falsified data set $\mathcal{D}^\dag = \{(x_t,u_t,c^\dag_t,x_{t+1}),t=0,1,2,\cdots,T-1\}$ and is misled into learning the policy
$$
\scriptsize
\begin{aligned}
\hat{K}_{\textrm{batch}}^\dag &= \begin{bmatrix}
-0.5613 & -0.0148 & 0.0234 & -0.9878 & 0.0188 & 0.0063\\
0.0268 & -0.5232 & -0.0045 & 0.0265 & -0.9337 & -0.0189\\
-0.0047 & -0.0145 & -0.5465 & 0.0171 & -0.0111 & -0.9630
\end{bmatrix},\\
\hat{k}_{\textrm{batch}}^\dag &=\begin{bmatrix}
0.5812 & -0.0382 & -0.5310
\end{bmatrix}',
\end{aligned}
$$
which is close to the attacker target policy (\ref{Eq:NumNefariousPolicy}).

The results of the attack model are demonstrated in Fig. \ref{Fig:AttackingBatchRL}. Fig. \ref{fig:AttackingBatchRL-a} presents the true cost data, the falsified cost data, and how much falsification is imposed. We can see that the attacker only applied small falsification on each cost signal (i.e., the cost data and the falsified cost data stays close). Fig. \ref{fig:AttackingBatchRL-b} shows the trajectories of the vehicle under the policy $(\hat{K}_{\mathrm{batch}}^*,\hat{k}_{\mathrm{batch}}^*)$ learned from the true data $\mathcal{D}$ and the policy $(\hat{K}_{\textrm{batch}}^\dag, \hat{k}_{\textrm{batch}}^\dag)$ learned from the falsified data $\mathcal{D}^\dag$. We can see that the attack successfully tricks the LQG agent into learning a policy that steers the vehicle into a dangerous position. The attack's success is indeed built on a very small falsification of the cost data. To see this, we have $\Vert \mathbf{c}^\dag - \mathbf{c} \Vert_2 / \Vert \mathbf{c} \Vert_2= 2.2107/96.2731=2.296\%$. The results show the attack model's effectiveness and exhibit the LQG agent's vulnerabilities using batch RL.

\begin{figure}[ht]
  \begin{subfigure}[b]{1\columnwidth}
    \includegraphics[width=\columnwidth]{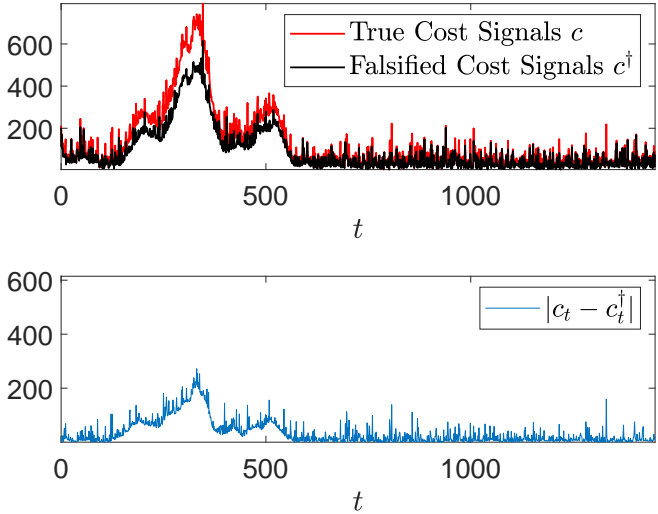}
    \subcaption{True cost signals, falsified cost signals, and their differences.}
    \label{fig:AttackingADP-a}
  \end{subfigure}
  \newline\vfill %%
  \begin{subfigure}[b]{1\columnwidth}
    \centering
    \includegraphics[width=0.8\columnwidth]{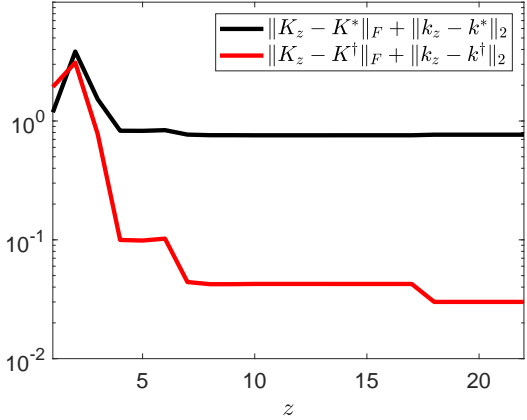}
    \subcaption{The policy updates $K_z,z=0,1,2,\cdots,22$ during the ADP learning process under the falsified cost signals and the policy sequences converge to the attacker's target policy $(K^\dag,k^\dag)$ rather than the optimal policy $(K^*,k^*)$.}
    \label{fig:AttackingADP-b}
  \end{subfigure}
  \caption{Demonstration of attacks against LQG agent using ADP.}
  \label{Fig:AttackingADP}
\end{figure}

\subsection{Attacking ADP Learner}

The LQG agent implements Algorithm \ref{Algo:ADPPolicyIter} to learn the optimal policy to control the vehicle by interacting with the environment. Set the tolerance $\epsilon_1 = \epsilon_2 = 1\mathrm{e}{-5}$. Set $\beta =10$. Let the initial control policy be 
$$
\scriptsize
\begin{aligned}
K_0 &= \begin{bmatrix}
0.0300     &    0    &     0  & -0.1000       &  0     &    0\\
0  &  0.5600   &      0    &     0   & 1.0000    &     0\\
0   &      0  &  0.5000    &     0      &   0  &  1.0000\\
\end{bmatrix},\\
k_0 &= \begin{bmatrix}
0 & 0 & 0
\end{bmatrix}'.
\end{aligned}
$$
The attacker aims to trick the ADP learner into learning the `nefarious' policy (\ref{Eq:NumNefariousPolicy}) by falsifying the cost signals received by the LQG agent. As we discussed in Section \ref{Sec:DecevADPLearner}, the attacker can use (\ref{Eq:NumFalsifiedCostPrameters}) to falsify cost signals into $c_t^\dag = x_t' D^\dag x_t + {d^\dag}'x_t + u_t'E^\dag u_t +r$. The numerical computation indicates that 
$$
| c^\dag_t - c_t | \leq 1.088\Vert x_t\Vert_2^2 + 0.419\Vert u_t\Vert_2^2 + 0.3060\Vert x_t\Vert_2.
$$

If the LQG agent receives the cost signals, the ADP algorithms finally lead the LQG agent to the policy
$$
\scriptsize
\begin{aligned}
\hat{K}_{\textrm{ADP}}^* &=
\begin{bmatrix}
-0.5280 &  -0.0058 &   0.0064  & -0.9673  &  0.0001  &  0.0035\\
-0.0021 &  -0.5409  & -0.0201  &  0.0015  & -0.9742 &  -0.0166\\
0.0065  & -0.0340 &  -0.5231  &  0.0033  & -0.0177  & -0.9632
\end{bmatrix},\\
\hat{k}_{\textrm{ADP}}^* &=\begin{bmatrix}
-0.0007 & 0.0054& -0.0093
\end{bmatrix}',
\end{aligned}
$$
which only differs slightly from the optimal policy (\ref{Eq:NumOptimalPolicy}).

If the LQG agent receives the falsified cost signals, the agent will be tricked into learning the policy
$$
\scriptsize
\begin{aligned}
\hat{K}_{\textrm{ADP}}^\dag &=
\begin{bmatrix}
 -0.5253 &   0.0473   & 0.0284 &  -0.9685  &  0.0145  &  0.0242\\
 -0.0009  & -0.5274 &  -0.0050  & -0.0028  & -0.9696 &  -0.0055\\
 -0.0093 &  -0.0675 &  -0.5754. &   -0.0054  & -0.0234 &  -1.0034
\end{bmatrix},\\
\hat{k}_{\textrm{ADP}}^\dag &=\begin{bmatrix}
0.5229 & 0.0076 & -0.5144
\end{bmatrix}',
\end{aligned}
$$
which is close the the attacker's target policy (\ref{Eq:NumNefariousPolicy}).

The results of the attack model on ADP are demonstrated in Fig. \ref{Fig:AttackingADP}. In Fig. \ref{fig:AttackingADP-a}, we present the true cost signals, the falsified cost signals, and the absolute value of their differences over time. As we can see, the falsified cost signals stay close to the true cost signals. The falsification grows as $\Vert x_t\Vert$ and $\Vert u_t \Vert$ increases because the falsified cost signals is generated using $c^\dag=x_t'D^\dag x_t + u_t'E^\dag u_t +r + d^\dag x_t$. Fig. \ref{fig:AttackingADP-b} shows the how the policy $(K_z,k_z)$ (in line 13-14 Algorithm \ref{Algo:ADPPolicyIter}) iterates during the learning process. The attacker gradually misleads the LQG agent into learning the `nefarious' policy $(K^\dag,k^\dag)$. The results show the attack model's effectiveness and exhibit the LQG agent's vulnerabilities using ADP approaches.

\section{Conclusion}
In this work, we have studied the vulnerabilities of the RL-enabled LQG control systems under cost signal falsification. We have shown that a small falsification of the cost parameters will only lead to a bounded change in the optimal policy. The bound is linear on the amount of falsification the attacker can apply to the cost parameters. This result shows a certain degree of robustness of the optimal policy to small unintended changes in the cost parameters. We have proposed an attack model where the attacker conducts intentional falsification on the cost parameters to mislead the agent into learning a `nefarious' policy. We have formulated the attack's problem as an optimization problem, which is proved to be convex, and developed necessary and sufficient conditions to check the feasibility of the attacker's problem.

Based on the attack model on cost parameters, we discussed two attack models on the batch RL and the ADP learners. Numerical results have shown that with only $2.296\%$ of falsification on the cost data, the attacker can achieve his goal --- misleading the batch RL learner into learning the 'nefarious' policy that leads the vehicle to a dangerous position. An ADP learner updates the policy iteratively and learns in an online manner. The results have demonstrated that the attacker can gradually trick the learner into learning the `nefarious' policy, even if the falsified cost signals stay close to true cost signals.

The paper has exhibited the effectiveness of these attack models and revealed vulnerabilities of the RL-enabled control systems. The authors hope this paper can bring more attention from the community to the potential security threats faced by RL-enabled control systems. Future works can focus on more effective attack models on cost signals, state measurements, and control commands. With a better understanding of these attack models, we can take further steps to develop reliable and effective detection and defensive mechanisms.

% \begin{figure}
%     \centering
%     \includegraphics[width=0.45\textwidth]{AirCraftF16.jpeg}
%     \caption{Caption}
%     \label{fig:my_label}
% \end{figure}
% \section{Conclusions}

% \addtolength{\textheight}{-12cm}   % This command serves to balance the column lengths
                                  % on the last page of the document manually. It shortens
                                  % the textheight of the last page by a suitable amount.
                                  % This command does not take effect until the next page
                                  % so it should come on the page before the last. Make
                                  % sure that you do not shorten the textheight too much.

%%%%%%%%%%%%%%%%%%%%%%%%%%%%%%%%%%%%%%%%%%%%%%%%%%%%%%%%%%%%%%%%%%%%%%%%%%%%%%%%

%%%%%%%%%%%%%%%%%%%%%%%%%%%%%%%%%%%%%%%%%%%%%%%%%%%%%%%%%%%%%%%%%%%%%%%%%%%%%%%%

%%%%%%%%%%%%%%%%%%%%%%%%%%%%%%%%%%%%%%%%%%%%%%%%%%%%%%%%%%%%%%%%%%%%%%%%%%%%%%%%
\section*{APPENDIX}

\subsection{Lemmas}

\begin{lemma}\label{Lemma:NormInquality1}
Given any two positive semi-definite matrices of the same dimension $X$ and $Z$,
$$
\Vert X(I+SX)^{-1}\Vert \leq  \Vert X \Vert. 
$$
\end{lemma}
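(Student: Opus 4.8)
The plan is to reduce the bound to the standard fact that the spectral norm is monotone on positive semi-definite matrices: if $0 \preceq M \preceq N$ then $\Vert M \Vert \le \Vert N \Vert$, both sides being the largest eigenvalue. Throughout, $S$ denotes the second positive semi-definite matrix in the hypothesis. As a preliminary observation, $SX$ has nonnegative spectrum, since it shares its nonzero eigenvalues with the positive semi-definite matrix $S^{1/2} X S^{1/2}$; hence every eigenvalue of $I + SX$ is at least $1$, so $I+SX$ is invertible and the left-hand side is well defined.

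The key step is the push-through rewriting
\[
X (I + SX)^{-1} \;=\; X^{1/2} \bigl( I + X^{1/2} S X^{1/2} \bigr)^{-1} X^{1/2} .
\]
To justify it, note $I + X^{1/2} S X^{1/2} \succeq I$ is invertible, and multiply the right-hand side on the right by $I + SX$: using $X^{1/2} S X = (X^{1/2} S X^{1/2}) X^{1/2}$ one gets $X^{1/2} \bigl( I + X^{1/2} S X^{1/2}\bigr)^{-1} \bigl( I + X^{1/2} S X^{1/2}\bigr) X^{1/2} = X$, which matches $X(I+SX)^{-1}(I+SX)$. Now set $T \coloneqq X^{1/2} S X^{1/2} \succeq 0$. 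Then $0 \preceq (I+T)^{-1} \preceq I$, and conjugating this two-sided inequality by $X^{1/2}$ gives
\[
0 \;\preceq\; X(I+SX)^{-1} \;=\; X^{1/2}(I+T)^{-1}X^{1/2} \;\preceq\; X^{1/2} X^{1/2} \;=\; X .
\]
Applying monotonicity of the spectral norm on positive semi-definite matrices then yields $\Vert X(I+SX)^{-1} \Vert \le \Vert X \Vert$.

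I do not expect a genuine obstacle here; the argument is almost entirely algebraic. The one point that deserves care is that $X^{1/2}$ need not be invertible when $X$ is singular, so the push-through identity must be verified by multiplying through by $I+SX$ (as above) rather than by cancelling $X^{1/2}$. If one wishes to avoid square roots altogether, an alternative is to show directly that $Y \coloneqq X(I+SX)^{-1}$ is symmetric --- using $(SX)' = XS$ together with the identity $X(I+SX)^{-1} = (I+XS)^{-1}X$ --- and positive semi-definite, and then bound $\lambda_{\max}(Y)$; but the square-root route above is the cleanest, and that is the one I would present.
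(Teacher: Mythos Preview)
Your proof is correct, and it follows the same underlying idea as the paper's: show that $X(I+SX)^{-1}$ is positive semi-definite and dominated by $X$ in the Loewner order, then invoke monotonicity of the spectral norm. The paper reaches the ordering via the rewriting $X(I+SX)^{-1} = (X^{-1}+S)^{-1}$, which requires assuming $X \succ 0$ first and then passing to general $X \succeq 0$ by a continuity argument; you instead use the push-through identity $X(I+SX)^{-1} = X^{1/2}(I+X^{1/2}SX^{1/2})^{-1}X^{1/2}$, which handles singular $X$ directly without any limiting step. Both routes are short; yours is a bit more self-contained since it also makes the invertibility of $I+SX$ explicit.
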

\begin{proof}
Assume $X>0$. Note that $(X^{-1} + S)X = (I + SX)$. Hence, the inverse of them are equal, i.e., $X^{-1}(X^{-1}+S)^{-1} = (I+SX)^{-1}$. Then, we obtain $X(I+SX)^{-1}=(X^{-1}+S)\leq X$. By a continuity argument, the inequality also holds when $X$ is only positive semi-definite.
\end{proof}

\begin{lemma}\label{Lemma:BoundFunctionNorm}
Let $\mathcal{L}(\mathbb{R}^{n\times n},\mathbb{R}^{n\times n})$ be the space of linear operators $\mathbb{R}^{n\times n}\rightarrow \mathbb{R}^{n\times n}$ with the following operator norm
$$
\Vert \mathcal{T}\Vert_\mathcal{L} = \max \{\Vert \mathcal{\mathcal{T}}(X)\Vert:\Vert X\Vert =1 \}, \ \ \ \textrm{for } \mathcal{T}\in \mathcal{L}(\mathbb{R}^{n\times n},\mathbb{R}^{n\times n}). 
$$
Define $\mathcal{T}(X) = X - \gamma A_c' X A_c$ with $0<\gamma \leq 1$ and $A_c$ is Schur stable. There exists a finite $\tau(A_c)$ such that $\Vert \mathcal{T}^{-1} \Vert_\mathcal{L}\leq \frac{\tau(A_c)^2}{1-\gamma \rho(A_c)^2}$
\end{lemma}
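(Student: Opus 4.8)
The plan is to invert $\mathcal{T}$ explicitly through a Neumann-type series and then read off the operator norm directly from it. Since $A_c$ is Schur stable we have $\rho(A_c)<1$, and since $0<\gamma\le 1$ it follows that $\gamma\rho(A_c)^2<1$; this is exactly the summability condition we will need. I would introduce the candidate inverse
$$
\mathcal{S}(Y)\coloneqq \sum_{k=0}^{\infty}\gamma^{k}(A_c')^{k}Y A_c^{k}.
$$
Using the defining property of $\tau(A_c)$, namely $\Vert A_c^{k}\Vert\le \tau(A_c)\rho(A_c)^{k}$ for all $k\ge 0$, each term obeys $\Vert \gamma^{k}(A_c')^{k}YA_c^{k}\Vert\le \gamma^{k}\Vert A_c^{k}\Vert^{2}\Vert Y\Vert\le \tau(A_c)^{2}\bigl(\gamma\rho(A_c)^{2}\bigr)^{k}\Vert Y\Vert$. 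Hence the series converges absolutely in the finite-dimensional space $\mathbb{R}^{n\times n}$, so $\mathcal{S}$ is a well-defined bounded linear operator. (If $\rho(A_c)=0$ the sum has at most $n$ nonzero terms and the same conclusions hold trivially.)

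Next I would verify that $\mathcal{S}=\mathcal{T}^{-1}$. Because the series defining $\mathcal{S}(Y)$ converges absolutely and both $\mathcal{T}$ and $X\mapsto \gamma A_c'XA_c$ are fixed bounded linear maps, one may apply them term by term and telescope:
$$
\mathcal{T}\bigl(\mathcal{S}(Y)\bigr)=\sum_{k=0}^{\infty}\gamma^{k}(A_c')^{k}YA_c^{k}-\sum_{k=0}^{\infty}\gamma^{k+1}(A_c')^{k+1}YA_c^{k+1}=Y,
$$
and analogously $\mathcal{S}\bigl(\mathcal{T}(X)\bigr)=X$ for every $X\in\mathbb{R}^{n\times n}$. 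Thus $\mathcal{T}$ is a bijection with inverse $\mathcal{S}$, so $\mathcal{T}^{-1}$ is well-defined on all of $\mathbb{R}^{n\times n}$.

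Finally, the norm estimate falls out of the same term-by-term bound: for any $Y$,
$$
\Vert \mathcal{T}^{-1}(Y)\Vert=\Vert \mathcal{S}(Y)\Vert\le \sum_{k=0}^{\infty}\gamma^{k}\Vert A_c^{k}\Vert^{2}\Vert Y\Vert\le \tau(A_c)^{2}\Vert Y\Vert\sum_{k=0}^{\infty}\bigl(\gamma\rho(A_c)^{2}\bigr)^{k}=\frac{\tau(A_c)^{2}}{1-\gamma\rho(A_c)^{2}}\Vert Y\Vert,
$$
and taking the supremum over $\Vert Y\Vert=1$ gives $\Vert \mathcal{T}^{-1}\Vert_{\mathcal{L}}\le \tau(A_c)^{2}/(1-\gamma\rho(A_c)^{2})$, as claimed. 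I expect essentially no obstacle beyond bookkeeping; the one place the hypothesis is genuinely used is in guaranteeing simultaneously that $\gamma\rho(A_c)^{2}<1$ (for convergence) and that $\tau(A_c)$ is finite (so that the stated constant makes sense), i.e.\ that $\Vert A_c^{k}\Vert$ decays geometrically at rate $\rho(A_c)$ up to a finite multiplicative constant — which is exactly what Schur stability of $A_c$ provides. Everything else reduces to summing a geometric series.
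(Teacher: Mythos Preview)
Your proposal is correct and follows essentially the same route as the paper: both write the inverse explicitly as the Lyapunov-type series $\mathcal{T}^{-1}(Y)=\sum_{k\ge 0}\gamma^{k}(A_c')^{k}YA_c^{k}$, bound each term via $\Vert A_c^{k}\Vert\le \tau(A_c)\rho(A_c)^{k}$, and sum the resulting geometric series. If anything, you are slightly more thorough than the paper in checking that the series gives a two-sided inverse (the paper just says ``it is easy to verify'') and in handling the $\rho(A_c)=0$ edge case.
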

\begin{proof}
Note that $A_c$ is Schur stable. There exists $L$ such that $M = X - \gamma A_c' XA_c$, which is a discrete-time Lyapunov equation given $M$ and $A_c$. It is easy to verify that $X= \sum_{k=0}^\infty \gamma^k (A_c')^k M (A_c)^k$ by pluging it back into the Lyapunov equation. Hence, we have
$$
\begin{aligned}
\left\Vert \mathcal{T}^{-1}(M) \right\Vert &= \left\Vert \sum_{k=0}^\infty \gamma^k(A_c')^kM(A_c)^k \right\Vert \\
&\leq \sum_{k=0}^\infty \gamma^k \Vert A_c^k \Vert^2 \Vert M \Vert. 
\end{aligned}
$$
By Gelfand's formula, there exists a finite $\tau(M)$ such that $\Vert A_c^k \Vert \leq \tau(A_c) \rho(A_c)^k$ for every $k\geq 0$. Here, $\rho(A_c)$ is the spectral radius of $A_c$ and $\tau(A_c)$ is defined as $\tau(A_c)\coloneqq \sup_{k\in\mathbb{N}} \{\Vert A_c^k \Vert/ \rho(A_c)^k\}$. Since $\rho(A_c)<1$, $\sum_{k=1}^\infty \gamma^k \Vert A_c\Vert^k$ is convergent and $\Vert \mathcal{T}^{-1}\Vert_\mathcal{L}\leq \sum_{k=0}^\infty \gamma^k \Vert A_c^k \Vert^2\leq \frac{\tau(A_c)^2}{1-\gamma \rho(A_c)^2}$.
\end{proof}

\begin{lemma}\label{Eq:NormInquality2}
Let $\Delta E = E_2-E_1$, where $E_1$ and $E_2$ are two square matrices $E_1$ and $E_2$. Assume $\Vert \Delta E\Vert<1$ and $\Vert E^{-1}_1 \Delta E  \Vert<1$. We have
$$
\Vert E_2^{-1} - E_1^{-1} \Vert \leq \frac{\Vert E_1^{-1} \Vert^2 \Vert \Delta E\Vert}{1-\Vert E_1^{-1}\Vert}.
$$
\end{lemma}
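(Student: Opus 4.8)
The plan is to reduce everything to a resolvent-type identity together with a Neumann-series bound. First I would certify that $E_2$ is invertible: factor $E_2 = E_1 + \Delta E = E_1(I + E_1^{-1}\Delta E)$, and since $\Vert E_1^{-1}\Delta E\Vert < 1$ by hypothesis, the Neumann series shows $I + E_1^{-1}\Delta E$ is invertible with
$$
\Vert (I + E_1^{-1}\Delta E)^{-1} \Vert \leq \frac{1}{1 - \Vert E_1^{-1}\Delta E\Vert}.
$$
Hence $E_2^{-1} = (I + E_1^{-1}\Delta E)^{-1} E_1^{-1}$ exists, with $\Vert E_2^{-1}\Vert \leq \Vert E_1^{-1}\Vert/(1 - \Vert E_1^{-1}\Delta E\Vert)$.

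Next I would use the identity $E_2^{-1} - E_1^{-1} = E_2^{-1}(E_1 - E_2)E_1^{-1} = -E_2^{-1}(\Delta E)E_1^{-1}$, valid whenever both inverses exist. Taking the spectral norm and using submultiplicativity together with the bound on $\Vert E_2^{-1}\Vert$ from the previous step gives
$$
\Vert E_2^{-1} - E_1^{-1}\Vert \leq \Vert E_2^{-1}\Vert\, \Vert \Delta E\Vert\, \Vert E_1^{-1}\Vert \leq \frac{\Vert E_1^{-1}\Vert^2\, \Vert \Delta E\Vert}{1 - \Vert E_1^{-1}\Delta E\Vert}.
$$
Equivalently, one can bypass $E_2^{-1}$ and expand $E_2^{-1} - E_1^{-1} = \left(\sum_{k\geq 1}(-E_1^{-1}\Delta E)^k\right)E_1^{-1}$ directly, arriving at the same estimate after summing the geometric series.

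Finally, to obtain the stated form of the bound I would relax the denominator: since $\Vert \Delta E\Vert < 1$, submultiplicativity gives $\Vert E_1^{-1}\Delta E\Vert \leq \Vert E_1^{-1}\Vert\,\Vert \Delta E\Vert \leq \Vert E_1^{-1}\Vert$, so $1 - \Vert E_1^{-1}\Delta E\Vert \geq 1 - \Vert E_1^{-1}\Vert > 0$, and the claim follows. The only real subtlety is this last relaxation: it is where the hypothesis $\Vert \Delta E\Vert < 1$ is consumed and where the bound is deliberately loosened from the sharper $1 - \Vert E_1^{-1}\Delta E\Vert$ to $1 - \Vert E_1^{-1}\Vert$ (which tacitly presumes $\Vert E_1^{-1}\Vert < 1$ so the displayed denominator is positive); everything before it is the routine Neumann-series/resolvent-identity argument.
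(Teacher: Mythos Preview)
Your proposal is correct and follows essentially the same route as the paper: the resolvent identity $E_2^{-1}-E_1^{-1}=-E_2^{-1}(\Delta E)E_1^{-1}$, a perturbation bound on $\Vert E_2^{-1}\Vert$ (you phrase it via the Neumann series, the paper via the equivalent triangle-inequality rearrangement), and the final relaxation $\Vert E_1^{-1}\Delta E\Vert\le\Vert E_1^{-1}\Vert$ using $\Vert\Delta E\Vert<1$. Your remark that the stated bound tacitly needs $\Vert E_1^{-1}\Vert<1$ for the denominator to be positive is a valid observation about the lemma as written.
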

\begin{proof}
A straight calculation gives $E_1^{-1} - E_2^{-1} = E_1^{-1}\Delta E E_2^{-1}.$
It follows immediately that 
\begin{equation}\label{Eq:ImmediateStep2}
\Vert E_1^{-1} - E_2^{-1} \Vert = \Vert E_1^{-1}\Delta E E^{-1}_2\Vert\leq  \Vert E_1^{-1}\Delta E \Vert \Vert E^{-1}_2\Vert.
\end{equation}
Applying the triangle inequality, we obtain $\Vert E_2^{-1}  \Vert \leq \Vert E_1^{-1}\Vert + \Vert E_1^{-1} \Delta E\Vert \Vert E_2^{-1}\Vert\leq \frac{\Vert E_1^{-1}\Vert}{1-\Vert E_1^{-1} \Delta E\Vert}.$ Plugging it back into (\ref{Eq:ImmediateStep2}) yields
$$
\Vert E_2^{-1} - E_1^{-1} \Vert \leq  \frac{\Vert E_1^{-1}\Vert^2 \Vert \Delta E\Vert}{1-\Vert E_1^{-1} \Delta E\Vert}.
$$
Note that given $\Vert \Delta E\Vert <1$, $$\Vert E_1^{-1} \Delta E \Vert\leq \Vert   E_1^{-1}\Vert \Vert \Delta E\Vert\leq \Vert E_1^{-1}\Vert.
$$
Hence, 
$$
\Vert E_2^{-1} - E_1^{-1} \Vert \leq  \frac{\Vert E_1^{-1}\Vert^2}{1-\Vert E_1^{-1} \Vert}   \Vert \Delta E\Vert.
$$
\end{proof}

\begin{lemma}\label{Lemma:InvertibleMatrix}
Given a real-valued Schur stable matrix $A_c$ and a scalar $0<\gamma\leq 1$, $I-\gamma A_c$ is invertible.
\end{lemma}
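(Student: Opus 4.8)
The plan is to reduce the claim to a statement about the spectrum of $A_c$. By definition, $A_c$ being Schur stable means every eigenvalue $\lambda$ of $A_c$ satisfies $|\lambda| < 1$, i.e.\ $\rho(A_c) < 1$. First I would observe that a square matrix $I - \gamma A_c$ fails to be invertible if and only if $1$ is an eigenvalue of $\gamma A_c$, equivalently $1/\gamma$ is an eigenvalue of $A_c$ (this uses $\gamma \neq 0$, which holds since $0 < \gamma \le 1$).

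Next I would note that $|1/\gamma| \ge 1$ because $\gamma \le 1$. Combined with $\rho(A_c) < 1$, no eigenvalue of $A_c$ can equal $1/\gamma$, since every eigenvalue of $A_c$ has modulus strictly less than $1 \le |1/\gamma|$. Hence $I - \gamma A_c$ has no zero eigenvalue and is therefore invertible.

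As an alternative (and arguably cleaner) route, I would invoke the Neumann series: since $\rho(\gamma A_c) = \gamma\,\rho(A_c) \le \rho(A_c) < 1$, the series $\sum_{k=0}^{\infty} \gamma^k A_c^k$ converges and provides an explicit inverse of $I - \gamma A_c$, by the same reasoning already used in the proof of Lemma \ref{Lemma:BoundFunctionNorm} for the Lyapunov operator. Either argument suffices.

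There is no genuine obstacle here; the only point requiring a moment's care is ensuring $\gamma > 0$ so that ``$1$ is an eigenvalue of $\gamma A_c$'' is equivalent to ``$1/\gamma$ is an eigenvalue of $A_c$'', and using $\gamma \le 1$ to place $1/\gamma$ outside the open unit disk. I would therefore favor the eigenvalue argument for its brevity, mentioning the Neumann-series form since the resulting explicit inverse is reused elsewhere in the paper.
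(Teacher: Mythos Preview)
Your proposal is correct and follows essentially the same eigenvalue argument as the paper: the paper shows that each eigenvalue of $I-\gamma A_c$ has the form $1-\gamma\lambda$ with $|\lambda|<1$ and hence has strictly positive real part, while you argue equivalently that singularity would force $1/\gamma$ (with $|1/\gamma|\ge 1$) to be an eigenvalue of $A_c$, contradicting $\rho(A_c)<1$. The Neumann-series remark is a fine bonus but not needed for the lemma itself.
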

\begin{proof}
Suppose that $\lambda$ is an eigenvalue of $ A_c$ and $v$ is its associated vector. Hence, we have
$$
A_cv=\lambda v.
$$
Therefore, we have
$$
(I-\gamma A_c)v=  v-\gamma \lambda v = (1-\gamma \lambda) v. 
$$
Hence, for every eigenvalue $\lambda$ of matrix $A_c$, $1-\gamma \lambda$ is an eigenvalue of the matrix $I-\gamma A_c$. Since $A_c$ is Schur stable, $\vert \lambda \vert<1$ for every eigenvalue $\lambda$ of matrix $A_c$. Hence, the real part of $1-\gamma \lambda$ is non-zero for every eigenvalue $\lambda$ of matrix $A_c$. Thus, $I-\gamma A_c$ is invertible.
\end{proof}

\begin{lemma}\label{Lemma:SolutionBoundConvexFun}
Suppose $f_1,f_2\in\mathbb{R}^n \rightarrow \mathbb{R}$ are twice differentiable and $\alpha$-strongly convex, i.e., $\nabla^2 f_1\leq \alpha I$ and $\nabla^2f_2\leq \alpha I$. Let $u_1 = \argmin_u f_1(u)$ and $u_2 =\argmin_u f_2(u)$. Suppose $\Vert f_1(u_2)\Vert \leq \epsilon$, then $\Vert u_1-u_2 \Vert \leq \frac{\epsilon}{\alpha}$.
\end{lemma}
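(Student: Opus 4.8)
The plan is to combine the first-order optimality condition at the two minimizers with the strong monotonicity of $\nabla f_1$ that $\alpha$-strong convexity provides. (I read the stated hypotheses in their intended form, namely $\nabla^2 f_i \succeq \alpha I$ and $\Vert \nabla f_1(u_2)\Vert \le \epsilon$; that reading is the one consistent with the conclusion and with the way the lemma is used in the proof of Theorem~\ref{Theo:BoundOnk}.)

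First I would record that, since $f_1$ is differentiable and convex with minimizer $u_1$, the first-order condition gives $\nabla f_1(u_1)=0$ (the analogous fact for $f_2$ is not needed). Next I would establish strong monotonicity of $\nabla f_1$: for all $x,y\in\mathbb{R}^n$,
$$
\langle \nabla f_1(x)-\nabla f_1(y),\; x-y\rangle \;\ge\; \alpha\Vert x-y\Vert^2 .
$$
This can be obtained either from the integral representation
$$
\nabla f_1(x)-\nabla f_1(y)=\int_0^1 \nabla^2 f_1\big(y+t(x-y)\big)(x-y)\,dt
$$
together with $\nabla^2 f_1\succeq \alpha I$, or, avoiding the Hessian entirely, by noting that the scalar map $t\mapsto f_1\big(y+t(x-y)\big)-\tfrac{\alpha}{2}t^2\Vert x-y\Vert^2$ is convex on $[0,1]$, so its derivative is nondecreasing; evaluating that derivative at $t=1$ and $t=0$ yields the displayed inequality.

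Finally I would apply the inequality with $x=u_2$ and $y=u_1$, substitute $\nabla f_1(u_1)=0$, and chain Cauchy–Schwarz with the hypothesis:
$$
\alpha\Vert u_2-u_1\Vert^2 \;\le\; \langle \nabla f_1(u_2),\; u_2-u_1\rangle \;\le\; \Vert \nabla f_1(u_2)\Vert\,\Vert u_2-u_1\Vert \;\le\; \epsilon\,\Vert u_2-u_1\Vert .
$$
If $u_1=u_2$ the claim is trivial; otherwise divide through by $\Vert u_2-u_1\Vert>0$ to get $\Vert u_1-u_2\Vert\le \epsilon/\alpha$.

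I do not expect a real obstacle here. The only point requiring mild care is the passage from the pointwise Hessian lower bound to its integrated (monotonicity) form, and the hypothesis ``twice differentiable'' is precisely what legitimizes the integral representation; the convexity-of-a-scalar-function argument in the second step sidesteps even this if one prefers to keep the derivation self-contained. Everything else is Cauchy–Schwarz and a single division.
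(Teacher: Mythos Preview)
Your argument is correct and follows essentially the same idea as the paper: use $\nabla f_1(u_1)=0$ and the Hessian lower bound to control $\Vert u_1-u_2\Vert$ by $\Vert\nabla f_1(u_2)\Vert$. The paper does this via a one-point Taylor expansion $\nabla f_1(u_2)=\nabla^2 f_1(\bar u)(u_2-u_1)$ and the eigenvalue bound $\Vert\nabla^2 f_1(\bar u)v\Vert\ge\alpha\Vert v\Vert$, whereas you use the strong-monotonicity inequality plus Cauchy--Schwarz; your route is the cleaner one, since it avoids invoking a mean-value equality for the vector-valued map $\nabla f_1$ (which need not hold at a single intermediate point when $n>1$), replacing it with the integral representation that is genuinely valid.
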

\begin{proof}
An application of Taylor expansion yields
$$
\nabla f_1(u_2) = \nabla f_1(u_1) + \nabla^2 f_1(\bar{u})(u_2-u_1),
$$
with $\bar{u} = \beta u_1 + (1-\beta)u_2$ for some $0\leq\beta\leq 1$. Since $u_1$ is a stationary point of $f_1$, i.e., $\nabla f_1(u_1)=0$, we have
$$
\Vert \alpha I (u_1 -u_2)\Vert \leq \Vert \nabla^2 f_1(\bar{u})(u_1-u_2)\Vert = \Vert \nabla f_1(u_2)\Vert \leq \epsilon,
$$
which gives $\Vert u_1 - u_2\Vert\leq \frac{\epsilon}{\alpha}$.
\end{proof}

\begin{lemma}\label{Lemma:BoundonController}
Define the functions $f_i(x,u)= \frac{1}{2} u'E_i u +\gamma \frac{1}{2} (Ax+Bu)' P_i(Ax+Bu)$ for $i=1,2$. Suppose $E_1$ and $E_2$ are positive definite. Suppose $\Vert E_1-E_2\Vert \leq \epsilon$ and $\Vert P_1-P_2 \Vert\leq f(\epsilon)$ with $\epsilon\leq \lambda_{\min}(E_1)/2$, where $\lambda_{\min}(\cdot)$ is the smallest eigenvalue of a matrix. Let $K_i x = \argmin_u f_i(x,u)$. Then, we have

\begin{equation}\label{Eq:BoundonK}
\begin{aligned}
\Vert K_1-K_2 \Vert \leq& \frac{2\gamma}{\lambda_{\min}(E_1)}\max\{\Vert A\Vert,\Vert B\Vert \}^2(\Vert K_1\Vert +1) f(\epsilon) \\
&+\frac{2\gamma}{\lambda_{\min}(E_1)} \Vert K_1\Vert \epsilon.
\end{aligned}
\end{equation}
\end{lemma}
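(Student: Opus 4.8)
\noindent\emph{Proof proposal.}
The plan is to reduce the claim, for each fixed $x$, to Lemma~\ref{Lemma:SolutionBoundConvexFun} applied to the two quadratics $u\mapsto f_1(x,u)$ and $u\mapsto f_2(x,u)$. First I would note that $f_i(x,\cdot)$ is quadratic in $u$ with constant Hessian $E_i+\gamma B'P_iB\succeq E_i$ (this uses that the $P_i$ are positive semidefinite, as holds for the Riccati solutions to which the lemma is applied), so $K_ix$ is its unique stationary point, characterized by $\nabla_u f_i(x,K_ix)=(E_i+\gamma B'P_iB)K_ix+\gamma B'P_iAx=0$, and $f_i(x,\cdot)$ is $\lambda_{\min}(E_i)$-strongly convex. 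Since $\lambda_{\min}(\cdot)$ is $1$-Lipschitz in the spectral norm and $\Vert E_1-E_2\Vert\le\epsilon\le\lambda_{\min}(E_1)/2$, we have $\lambda_{\min}(E_2)\ge\lambda_{\min}(E_1)-\epsilon\ge\lambda_{\min}(E_1)/2$, so both $f_1(x,\cdot)$ and $f_2(x,\cdot)$ are $\alpha$-strongly convex with the common constant $\alpha=\lambda_{\min}(E_1)/2$. Securing this uniform constant is exactly where the hypothesis $\epsilon\le\lambda_{\min}(E_1)/2$ enters.

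With $x$ fixed I would then invoke Lemma~\ref{Lemma:SolutionBoundConvexFun} for $f_1(x,\cdot)$ and $f_2(x,\cdot)$ with their roles interchanged (legitimate since both are $\alpha$-strongly convex), evaluating the first-order residual at $K_1x$:
\[
\Vert K_1x-K_2x\Vert\le\tfrac{1}{\alpha}\Vert\nabla_u f_2(x,K_1x)\Vert=\tfrac{2}{\lambda_{\min}(E_1)}\Vert\nabla_u f_2(x,K_1x)\Vert .
\]
The reason to evaluate at $K_1x$ rather than at $K_2x$ is that $\nabla_u f_1(x,K_1x)=0$, which turns the residual into a difference of the two gradients at a common argument:
\[
\nabla_u f_2(x,K_1x)=\nabla_u f_2(x,K_1x)-\nabla_u f_1(x,K_1x)=(E_2-E_1)K_1x+\gamma B'(P_2-P_1)(A+BK_1)x .
\]
This expresses the residual through $K_1$, the gain that appears in the target bound, rather than through $K_2$, which would lead to a circular estimate.

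The last step is to bound this residual by the triangle inequality and submultiplicativity, using $\Vert E_2-E_1\Vert\le\epsilon$, $\Vert P_2-P_1\Vert\le f(\epsilon)$, $\Vert B\Vert\le\max\{\Vert A\Vert,\Vert B\Vert\}$, and $\Vert A+BK_1\Vert\le\Vert A\Vert+\Vert B\Vert\,\Vert K_1\Vert\le\max\{\Vert A\Vert,\Vert B\Vert\}(\Vert K_1\Vert+1)$, which gives
\[
\Vert\nabla_u f_2(x,K_1x)\Vert\le\big(\epsilon\Vert K_1\Vert+\gamma\max\{\Vert A\Vert,\Vert B\Vert\}^2(\Vert K_1\Vert+1)f(\epsilon)\big)\Vert x\Vert .
\]
Dividing by $\Vert x\Vert$ and taking the supremum over unit vectors $x$ converts this pointwise estimate into an operator-norm bound on $K_1-K_2$ of the form~(\ref{Eq:BoundonK}). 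I do not expect a real obstacle here: once (i) a single strong-convexity constant valid for both $f_1$ and $f_2$ is in hand and (ii) the gradient difference is taken at $K_1x$ so that it is controlled by $\Vert K_1\Vert$, the rest is routine bookkeeping with norm inequalities.
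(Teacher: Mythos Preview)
Your proposal is correct and follows essentially the same route as the paper's proof: both fix $x$, invoke Lemma~\ref{Lemma:SolutionBoundConvexFun} for the strongly convex quadratics $f_i(x,\cdot)$, bound the gradient residual $\nabla_u f_2(x,K_1x)-\nabla_u f_1(x,K_1x)$ in terms of $\Vert E_1-E_2\Vert$ and $\Vert P_1-P_2\Vert$, and then pass to the operator norm by taking $\Vert x\Vert\le 1$. The only cosmetic difference is that you fix the common strong-convexity constant $\alpha=\lambda_{\min}(E_1)/2$ at the outset via the Lipschitz property of $\lambda_{\min}$, whereas the paper first works with $\alpha=\min\{\lambda_{\min}(E_1),\lambda_{\min}(E_2)\}$ and only at the end uses Weyl's inequality together with $\epsilon\le\lambda_{\min}(E_1)/2$ to replace $1/\alpha$ by $2/\lambda_{\min}(E_1)$.
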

\begin{proof}
According to the definition of strong convexity, $f_1,f_2$ are $\alpha$-strongly convex in $u$ with $$
\alpha =\min \{ \lambda_{\min\ }(E_1),\ \lambda_{\min\ }(E_2) \}.$$
We can calculate the gradient of $f_i(x,u)$ as 
$$
\nabla f_i(x,u)= (\gamma B'P_iB+E_i)u + \gamma B'P_iAx.
$$
It is obvious that 
$$
\Vert \gamma B'P_1 B +E_1 - \gamma B'P_2 B -E_2\Vert \leq \gamma \Vert B\Vert^2\Vert P_1 - P_2\Vert + \Vert E_1 -E_2\Vert,
$$
and
$$
\Vert \gamma B'P_1A - \gamma B'P_2 A\Vert \leq \gamma \Vert B\Vert \Vert A\Vert \Vert P_1 - P_2\Vert.
$$
Hence, for any $x$ such that $\Vert x\Vert \leq 1$, we have
$$
\begin{aligned}
&\Vert \nabla f_1(x,u) -\nabla f_2(x,u)\Vert\\
\leq &\left[ \gamma \Vert B\Vert^2\Vert P_1 - P_2\Vert + \Vert E_1 -E_2\Vert\right]\Vert u\Vert\\
&\ + \left[\gamma \Vert B \Vert \Vert A\Vert \Vert P_1-P_2\Vert \right]\\
\leq & \gamma \max\{ \Vert A\Vert,\Vert B\Vert \}^2 \Vert P_1 -P_2\Vert (\Vert u\Vert +1) + \gamma \Vert E_1- E_2\Vert \Vert u\Vert
\end{aligned}
$$
From Lemma \ref{Lemma:SolutionBoundConvexFun}, we know
$$
\alpha \Vert K_1 x - K_2 x\Vert =\alpha \Vert u_1 - u_2\Vert \leq \Vert \nabla f_2(u_1)\Vert. 
$$
Note that $\Vert x\Vert \leq 1$ and $\Vert u_1\Vert \leq \Vert K_1 x\Vert\leq \Vert K_1\Vert$. We have
\begin{equation}\label{Eq:BoundOnIntermediateK}
\begin{aligned}
\Vert K_1 -K_2\Vert \leq & \frac{\gamma}{\alpha}\max\{\Vert A\Vert,\Vert B\Vert \}^2(\Vert K_1\Vert +1)\Vert P_1 - P_2\Vert \\
&+\frac{\gamma}{\alpha} \Vert K_1\Vert \epsilon.
\end{aligned}
\end{equation}
Note that $\epsilon \leq \lambda_{\min}(E_1)/2$. By Weyl's inequality $ \lambda_{\min}(E_1)/2 \leq  \alpha$, we can show (\ref{Eq:BoundonK}) from (\ref{Eq:BoundOnIntermediateK}).
\end{proof}

\begin{lemma}\label{Lemma:PerturbationLinearSystem}
Let $M\in\mathbb{R}^{n\times n}$ be a non-singular matrix, and $\Delta M \in\mathbb{R}^{n\times n}$. Assume $\Vert \Delta M\Vert \leq \frac{1}{2\Vert M^{-1}\Vert}$. Let $h\in\mathbb{R}^n$ and $h+
\Delta h \in\mathbb{R}^{n}$ be the solution of 
$$
Mh=d,\ \ \ (M+\Delta M)(h+\Delta h) = d+\Delta d,
$$
for some $d\in\mathbb{R}^n$ and $\Delta d \in\mathbb{R}^n$. Then, $M+\Delta M$ is non-singular and
$$
\Vert \Delta h\Vert \leq 2\Vert M^{-1} \Vert 
\left(\Vert \Delta d\Vert +\Vert \Delta M\Vert \Vert h \Vert   \right).
$$
\end{lemma}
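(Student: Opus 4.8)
The plan is to reduce everything to a Neumann-series estimate on the perturbed operator $M+\Delta M$. First I would factor $M+\Delta M = M\bigl(I + M^{-1}\Delta M\bigr)$ and observe that, by the hypothesis $\Vert \Delta M\Vert \leq \tfrac{1}{2}\Vert M^{-1}\Vert^{-1}$, we have $\Vert M^{-1}\Delta M\Vert \leq \Vert M^{-1}\Vert\,\Vert \Delta M\Vert \leq 1/2 < 1$. Hence $I + M^{-1}\Delta M$ is invertible with $\Vert (I+M^{-1}\Delta M)^{-1}\Vert \leq (1-\Vert M^{-1}\Delta M\Vert)^{-1} \leq 2$, which shows that $M+\Delta M$ is non-singular and, combining this with submultiplicativity of the operator norm, that $\Vert (M+\Delta M)^{-1}\Vert \leq 2\Vert M^{-1}\Vert$.

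Next I would extract an explicit expression for $\Delta h$. Expanding $(M+\Delta M)(h+\Delta h) = d+\Delta d$ and subtracting the identity $Mh = d$ yields $M\Delta h + \Delta M\, h + \Delta M\,\Delta h = \Delta d$, i.e. $(M+\Delta M)\Delta h = \Delta d - \Delta M\, h$. Since $M+\Delta M$ was shown to be invertible, this gives $\Delta h = (M+\Delta M)^{-1}(\Delta d - \Delta M\, h)$.

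Finally I would take norms and use the bound from the first step: $\Vert \Delta h\Vert \leq \Vert (M+\Delta M)^{-1}\Vert\,(\Vert \Delta d\Vert + \Vert \Delta M\Vert\,\Vert h\Vert) \leq 2\Vert M^{-1}\Vert\,(\Vert \Delta d\Vert + \Vert \Delta M\Vert\,\Vert h\Vert)$, which is exactly the claimed inequality. I do not expect a genuine obstacle here; the only point that needs a little care is getting the constant $2$ out of the assumption $\Vert \Delta M\Vert \leq \tfrac{1}{2}\Vert M^{-1}\Vert^{-1}$ via the Neumann-series bound, rather than a cruder estimate, and correctly keeping $M+\Delta M$ (not $M$) acting on $\Delta h$ in the perturbation identity so that no $\Vert \Delta h\Vert$ term appears on the right-hand side.
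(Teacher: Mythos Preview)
Your proposal is correct and essentially identical to the paper's own proof: both factor $M+\Delta M = M(I+M^{-1}\Delta M)$, use the bound $\Vert M^{-1}\Delta M\Vert\leq 1/2$ to control $\Vert(I+M^{-1}\Delta M)^{-1}\Vert\leq 2$, derive $\Delta h = (M+\Delta M)^{-1}(\Delta d - \Delta M\,h)$, and then take norms. The only cosmetic difference is that the paper establishes invertibility of $I+M^{-1}\Delta M$ via a direct lower bound on $\Vert(I+M^{-1}\Delta M)x\Vert$ rather than invoking the Neumann series by name, but the resulting estimate is the same.
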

\begin{proof}
It is easy to see that 
$$
M+\Delta M = M(I+M^{-1}\Delta M).
$$
Since $M$ is non-singular, if $I+M^{-1}\Delta M$ is non-singular, $M+\Delta M$ is non-singular. Indeed, for every non-zero $x\in\mathbb{R}^n$,
$$
\begin{aligned}
\Vert (I + M^{-1}\Delta M)x \Vert &\geq \Vert x\Vert - \Vert M^{-1}\Delta M\Vert \Vert x\Vert\\
&=(1-\Vert M^{-1}\Delta M\Vert) \Vert x\Vert\\
&\geq 0,
\end{aligned}
$$
which shows the non-singularity of $M+\Delta M$.
Now, we can solve for $\Delta h$:
$$
\begin{aligned}
\Delta h &= (M+\Delta M)^{-1}(\Delta d - \Delta M h)\\
&= (I + M^{-1}\Delta M)^{-1} M^{-1} (\Delta d - \Delta M h)\\
\end{aligned}
$$
Hence, we have
$$
\begin{aligned}
\Vert \Delta h\Vert &\leq \Vert (I+M^{-1}\Delta M)^{-1}\Vert \Vert M^{-1}\Vert (\Vert \Delta d\Vert  +\Vert \Delta M\Vert \Vert h\Vert )\\
&\leq \frac{\Vert M^{-1}\Vert}{1-\Vert M^{-1}\Vert \Vert \Delta M\Vert} \left(\Vert \Delta d\Vert + \Vert \Delta M\Vert \Vert h \Vert \right)\\
&\leq 2\Vert M^{-1}\Vert \left( \Vert \Delta d\Vert + \Vert \Delta M\Vert \Vert h \Vert \right).
\end{aligned}
$$
\end{proof}

\subsection{Proof of Lemma \ref{Lemma:PolicyUniquelyDecided}}\label{Proof:PolicyUniquelyDecided}
\begin{proof}
From Theorem 4 of \cite{kushner1971introduction}, we know that if $(A,B)$ is controllable and $(A,D^{1/2})$ is observable, the solution to the Riccati equation $(\ref{Eq:RiccatiEquation})$ is positive definite and unique and $K$ is stabilizing. Then from (\ref{Eq:ControllerParameters}), we know $K^*$ is uniquely decided by the system parameters $(A,B)$ and the cost parameters $(D,E)$.

Since $K$ is stabilizing, from Lemma \ref{Lemma:InvertibleMatrix}, $I_n - \gamma(A+BK)$ is invertible. Hence, the solution $h^*$ of  (\ref{Eq:LinearEquation}) is unique and depend only on the control parameters $(A,B)$ and $K^*$. Then, (\ref{Eq:ControllerParameters}) shows that $k^*$ is uniquely decided by $(A,B)$ and $(D,E,d)$.
\end{proof}

%Appendixes should appear before the acknowledgment.

%\section*{ACKNOWLEDGMENT}

%The preferred spelling of the word ÒacknowledgmentÓ in America is without an ÒeÓ after the ÒgÓ. Avoid the stilted expression, ÒOne of us (R. B. G.) thanks . . .Ó  Instead, try ÒR. B. G. thanksÓ. Put sponsor acknowledgments in the unnumbered footnote on the first page.

%%%%%%%%%%%%%%%%%%%%%%%%%%%%%%%%%%%%%%%%%%%%%%%%%%%%%%%%%%%%%%%%%%%%%%%%%%%%%%%%

%References are important to the reader; therefore, each citation must be complete and correct. If at all possible, references should be commonly available publications.

\bibliography{references}

% Generated by IEEEtran.bst, version: 1.14 (2015/08/26)
\begin{thebibliography}{10}
\providecommand{\url}[1]{#1}
\csname url@samestyle\endcsname
\providecommand{\newblock}{\relax}
\providecommand{\bibinfo}[2]{#2}
\providecommand{\BIBentrySTDinterwordspacing}{\spaceskip=0pt\relax}
\providecommand{\BIBentryALTinterwordstretchfactor}{4}
\providecommand{\BIBentryALTinterwordspacing}{\spaceskip=\fontdimen2\font plus
\BIBentryALTinterwordstretchfactor\fontdimen3\font minus
  \fontdimen4\font\relax}
\providecommand{\BIBforeignlanguage}[2]{{%
\expandafter\ifx\csname l@#1\endcsname\relax
\typeout{** WARNING: IEEEtran.bst: No hyphenation pattern has been}%
\typeout{** loaded for the language `#1'. Using the pattern for}%
\typeout{** the default language instead.}%
\else
\language=\csname l@#1\endcsname
\fi
#2}}
\providecommand{\BIBdecl}{\relax}
\BIBdecl

\bibitem{bradtke1994adaptive}
S.~J. Bradtke, B.~E. Ydstie, and A.~G. Barto, ``Adaptive linear quadratic
  control using policy iteration,'' in \emph{Proceedings of 1994 American
  Control Conference-ACC'94}, vol.~3.\hskip 1em plus 0.5em minus 0.4em\relax
  IEEE, 1994, pp. 3475--3479.

\bibitem{mania2019certainty}
H.~Mania, S.~Tu, and B.~Recht, ``Certainty equivalence is efficient for linear
  quadratic control,'' \emph{Advances in Neural Information Processing
  Systems}, vol.~32, 2019.

\bibitem{lewis2009reinforcement}
F.~L. Lewis and D.~Vrabie, ``Reinforcement learning and adaptive dynamic
  programming for feedback control,'' \emph{IEEE circuits and systems
  magazine}, vol.~9, no.~3, pp. 32--50, 2009.

\bibitem{jiang2012computational}
Y.~Jiang and Z.-P. Jiang, ``Computational adaptive optimal control for
  continuous-time linear systems with completely unknown dynamics,''
  \emph{Automatica}, vol.~48, no.~10, pp. 2699--2704, 2012.

\bibitem{pang2019adaptive}
B.~Pang, T.~Bian, and Z.-P. Jiang, ``Adaptive dynamic programming for
  finite-horizon optimal control of linear time-varying discrete-time
  systems,'' \emph{Control theory and technology}, vol.~17, no.~1, pp. 73--84,
  2019.

\bibitem{gorges2019distributed}
D.~G{\"o}rges, ``Distributed adaptive linear quadratic control using
  distributed reinforcement learning,'' \emph{IFAC-PapersOnLine}, vol.~52,
  no.~11, pp. 218--223, 2019.

\bibitem{cetinkaya2019overview}
A.~Cetinkaya, H.~Ishii, and T.~Hayakawa, ``An overview on denial-of-service
  attacks in control systems: Attack models and security analyses,''
  \emph{Entropy}, vol.~21, no.~2, p. 210, 2019.

\bibitem{huang2021cross}
Y.~Huang, Z.~Xiong, and Q.~Zhu, ``Cross-layer coordinated attacks on
  cyber-physical systems: A lqg game framework with controlled observations,''
  in \emph{2021 European Control Conference (ECC)}.\hskip 1em plus 0.5em minus
  0.4em\relax IEEE, 2021, pp. 521--528.

\bibitem{mo2010false}
Y.~Mo and B.~Sinopoli, ``False data injection attacks in control systems,'' in
  \emph{Preprints of the 1st workshop on Secure Control Systems}, 2010, pp.
  1--6.

\bibitem{miao2016coding}
F.~Miao, Q.~Zhu, M.~Pajic, and G.~J. Pappas, ``Coding schemes for securing
  cyber-physical systems against stealthy data injection attacks,'' \emph{IEEE
  Transactions on Control of Network Systems}, vol.~4, no.~1, pp. 106--117,
  2016.

\bibitem{zhang2017strategic}
T.~Zhang and Q.~Zhu, ``Strategic defense against deceptive civilian gps
  spoofing of unmanned aerial vehicles,'' in \emph{International Conference on
  Decision and Game Theory for Security}.\hskip 1em plus 0.5em minus
  0.4em\relax Springer, 2017, pp. 213--233.

\bibitem{liu2020secure}
Y.-C. Liu, G.~Bianchin, and F.~Pasqualetti, ``Secure trajectory planning
  against undetectable spoofing attacks,'' \emph{Automatica}, vol. 112, p.
  108655, 2020.

\bibitem{ma2019policy}
Y.~Ma, X.~Zhang, W.~Sun, and J.~Zhu, ``Policy poisoning in batch reinforcement
  learning and control,'' \emph{Advances in Neural Information Processing
  Systems}, vol.~32, 2019.

\bibitem{zhang2020online}
X.~Zhang, X.~Zhu, and L.~Lessard, ``Online data poisoning attacks,'' in
  \emph{Learning for Dynamics and Control}.\hskip 1em plus 0.5em minus
  0.4em\relax PMLR, 2020, pp. 201--210.

\bibitem{behzadan2019adversarial}
V.~Behzadan and A.~Munir, ``Adversarial reinforcement learning framework for
  benchmarking collision avoidance mechanisms in autonomous vehicles,''
  \emph{IEEE Intelligent Transportation Systems Magazine}, vol.~13, no.~2, pp.
  236--241, 2019.

\bibitem{chen2021adversarial}
Y.-Y. Chen, C.-T. Chen, C.-Y. Sang, Y.-C. Yang, and S.-H. Huang, ``Adversarial
  attacks against reinforcement learning-based portfolio management strategy,''
  \emph{IEEE Access}, vol.~9, pp. 50\,667--50\,685, 2021.

\bibitem{figura2021adversarial}
M.~Figura, K.~C. Kosaraju, and V.~Gupta, ``Adversarial attacks in
  consensus-based multi-agent reinforcement learning,'' in \emph{2021 American
  Control Conference (ACC)}.\hskip 1em plus 0.5em minus 0.4em\relax IEEE, 2021,
  pp. 3050--3055.

\bibitem{huang2019deceptive}
Y.~Huang and Q.~Zhu, ``Deceptive reinforcement learning under adversarial
  manipulations on cost signals,'' in \emph{International Conference on
  Decision and Game Theory for Security}.\hskip 1em plus 0.5em minus
  0.4em\relax Springer, 2019, pp. 217--237.

\bibitem{huang2021manipulating}
------, ``Manipulating reinforcement learning: Stealthy attacks on cost
  signals,'' \emph{Game Theory and Machine Learning for Cyber Security}, pp.
  367--388, 2021.

\bibitem{rakhsha2020policy}
A.~Rakhsha, G.~Radanovic, R.~Devidze, X.~Zhu, and A.~Singla, ``Policy teaching
  via environment poisoning: Training-time adversarial attacks against
  reinforcement learning,'' in \emph{International Conference on Machine
  Learning}.\hskip 1em plus 0.5em minus 0.4em\relax PMLR, 2020, pp. 7974--7984.

\bibitem{wang2020reinforcement}
J.~Wang, Y.~Liu, and B.~Li, ``Reinforcement learning with perturbed rewards,''
  in \emph{Proceedings of the AAAI Conference on Artificial Intelligence},
  vol.~34, no.~04, 2020, pp. 6202--6209.

\bibitem{xu2021transferable}
H.~Xu, R.~Wang, L.~Raizman, and Z.~Rabinovich, ``Transferable environment
  poisoning: Training-time attack on reinforcement learning,'' in
  \emph{Proceedings of the 20th International Conference on Autonomous Agents
  and MultiAgent Systems}, 2021, pp. 1398--1406.

\bibitem{zhang2020adaptive}
X.~Zhang, Y.~Ma, A.~Singla, and X.~Zhu, ``Adaptive reward-poisoning attacks
  against reinforcement learning,'' in \emph{International Conference on
  Machine Learning}.\hskip 1em plus 0.5em minus 0.4em\relax PMLR, 2020, pp.
  11\,225--11\,234.

\bibitem{wang2014exploration}
X.~Wang, Y.~Wang, D.~Hsu, and Y.~Wang, ``Exploration in interactive
  personalized music recommendation: a reinforcement learning approach,''
  \emph{ACM Transactions on Multimedia Computing, Communications, and
  Applications (TOMM)}, vol.~11, no.~1, pp. 1--22, 2014.

\bibitem{li2016deep}
\BIBentryALTinterwordspacing
J.~Li, W.~Monroe, A.~Ritter, D.~Jurafsky, M.~Galley, and J.~Gao, ``Deep
  reinforcement learning for dialogue generation,'' in \emph{Proceedings of the
  2016 Conference on Empirical Methods in Natural Language Processing}.\hskip
  1em plus 0.5em minus 0.4em\relax Austin, Texas: Association for Computational
  Linguistics, Nov. 2016, pp. 1192--1202. [Online]. Available:
  \url{https://aclanthology.org/D16-1127}
\BIBentrySTDinterwordspacing

\bibitem{modares2015optimized}
H.~Modares, I.~Ranatunga, F.~L. Lewis, and D.~O. Popa, ``Optimized assistive
  human--robot interaction using reinforcement learning,'' \emph{IEEE
  transactions on cybernetics}, vol.~46, no.~3, pp. 655--667, 2015.

\bibitem{priess2014solutions}
M.~C. Priess, R.~Conway, J.~Choi, J.~M. Popovich, and C.~Radcliffe, ``Solutions
  to the inverse lqr problem with application to biological systems analysis,''
  \emph{IEEE Transactions on control systems technology}, vol.~23, no.~2, pp.
  770--777, 2014.

\bibitem{zhang2020robust}
H.~Zhang, H.~Chen, C.~Xiao, B.~Li, M.~Liu, D.~Boning, and C.-J. Hsieh, ``Robust
  deep reinforcement learning against adversarial perturbations on state
  observations,'' \emph{Advances in Neural Information Processing Systems},
  vol.~33, pp. 21\,024--21\,037, 2020.

\bibitem{liu2021provably}
G.~Liu and L.~Lai, ``Provably efficient black-box action poisoning attacks
  against reinforcement learning,'' \emph{Advances in Neural Information
  Processing Systems}, vol.~34, 2021.

\bibitem{huang2022reinforcement}
Y.~Huang, L.~Huang, and Q.~Zhu, ``Reinforcement learning for feedback-enabled
  cyber resilience,'' \emph{Annual Reviews in Control}, 2022.

\bibitem{bertsekas1987dynamic}
D.~P. Bertsekas, \emph{Dynamic programming: deterministic and stochastic
  models}.\hskip 1em plus 0.5em minus 0.4em\relax Prentice-Hall, Inc., 1987.

\bibitem{kalman1964when}
\BIBentryALTinterwordspacing
R.~E. Kalman, ``{When Is a Linear Control System Optimal?}'' \emph{Journal of
  Basic Engineering}, vol.~86, no.~1, pp. 51--60, 03 1964. [Online]. Available:
  \url{https://doi.org/10.1115/1.3653115}
\BIBentrySTDinterwordspacing

\bibitem{fujii1984complete}
T.~Fujii and M.~Narazaki, ``A complete optimality condition in the inverse
  problem of optimal control,'' \emph{SIAM journal on control and
  optimization}, vol.~22, no.~2, pp. 327--341, 1984.

\bibitem{sugimoto1988solution}
K.~Sugimoto and Y.~Yamamoto, ``Solution to the inverse regulator problem for
  discrete-time systems,'' \emph{International Journal of Control}, vol.~48,
  no.~3, pp. 1285--1300, 1988.

\bibitem{iracleous1998new}
D.~Iracleous and A.~Alexandridis, ``New results to the inverse optimal control
  problem for discrete-time linear systems,'' \emph{Applied Mathematics and
  Computer Science}, vol.~8, no.~3, pp. 517--528, 1998.

\bibitem{kleinman1968iterative}
D.~Kleinman, ``On an iterative technique for riccati equation computations,''
  \emph{IEEE Transactions on Automatic Control}, vol.~13, no.~1, pp. 114--115,
  1968.

\bibitem{diamond2016cvxpy}
S.~Diamond and S.~Boyd, ``Cvxpy: A python-embedded modeling language for convex
  optimization,'' \emph{The Journal of Machine Learning Research}, vol.~17,
  no.~1, pp. 2909--2913, 2016.

\bibitem{kushner1971introduction}
H.~J. H.~J. Kushner, \emph{\BIBforeignlanguage{eng}{Introduction to stochastic
  control}}.\hskip 1em plus 0.5em minus 0.4em\relax New York: Holt, Rinehart
  and Winston, 1971.

\end{thebibliography}

\bibliographystyle{IEEEtran}

\end{document}